\newcommand\numberthis{\addtocounter{equation}{1}\tag{\theequation}}
\newcommand{\yy}{\mathbf{y}}
\newcommand{\xx}{\mathbf{x}}
\newcommand{\RR}{\mathbb{R}}
\newcommand{\hh}{\mathcal{H}}
\newcommand{\oomit}[1]{}
\newcommand{\define}{\widehat{=}}
\newcommand{\fb}{\mathbf{f}}
\def \CALC {{\mathcal C}}
\def \CALX {{\Xi}}
\def \LIDER  {{\mathcal L}}
\def \II  {{\CALX_0}}
\def \UU {S^u}
\def \RE  {{\mathcal{R}}}
\def \RH {{\RE_{\hh}}}
\def \psd  {{\tt CADpsd}}
\def \sdp {{\bf SDP}}
\def \ec {{\it exponential condition}}
\def \bc {{\it barrier certificate}}
\newtheorem{theorem}{Theorem}
\newtheorem{problem}{Problem}
\newtheorem{lemma}{Lemma}
\newtheorem{example}{Example}
\newtheorem{definition}{Definition}
\newtheorem{remark}{Remark}
\newtheorem{corollary}{Corollary}
\begin{document}

\title{Barrier Certificates Revisited}

\numberofauthors{4}

\author{
%
%
	\alignauthor \hspace*{-2cm}
	\mbox{Liyun Dai, Ting Gan and Bican Xia} \\
	\hspace*{-2cm} \affaddr{\mbox{LMAM \& School of Mathematical Sciences,} \\
   \hspace*{-5.8cm} Peking University }\\
	  \email{\hspace*{-3cm} \{dailiyun,gant\}@pku.edu.cn}\\
      \email{\hspace*{-3cm} xbc@math.pku.edu}
	\and  
	 \alignauthor Naijun Zhan\\
	\affaddr{\mbox{State Key Lab. of Computer Science,} \\
  \mbox{Institute of Software, Chinese Academy of Sciences}}\\
	\email{znj@ios.ac.cn}
}

\maketitle
\begin{abstract}
	A barrier certificate can separate the state space of a considered hybrid system (HS)
  into safe and unsafe parts according to the safety property to be verified. Therefore this
  notion has been widely used in the verification of HSs.
    A stronger condition on 
    barrier certificates means that less expressive barrier certificates can be synthesized.
    On the other hand, synthesizing more expressive barrier certificates often means
    high complexity.
     In \cite{kong2013}, Kong et al considered how to
     relax the condition of barrier certificates while still keeping
      their convexity so that one can synthesize more expressive barrier certificates
       efficiently using semi-definite programming ({\sdp}).
     In this paper, we first discuss how to relax
     the condition of barrier certificates in a general way, while still keeping
      their convexity. Particularly, one can then utilize different weaker conditions flexibly to
       synthesize different kinds of barrier certificates with more expressiveness
         efficiently using {\sdp}.
       These barriers give more opportunities to verify the considered system.
        We also show how to combine two functions together to form a combined barrier certificate
         in order to prove a safety property under consideration, whereas neither of them can be used as a barrier certificate separately,
        even according to any relaxed condition.  Another contribution of this paper is that we
      discuss how to discover certificates
     from the general relaxed condition by {\sdp}. In particular, we
     focus on how to avoid the unsoundness because of numeric error caused by {\sdp}
     with symbolic checking.
\end{abstract}

\category{F.3.1}{Specifying and Verifying and Reasoning about Programs}{Invariants}

\terms{Theory}

\keywords{Inductive invariant, barrier certificate, safety verification,
hybrid system, nonlinear system, sum of squares} 

\section{Introduction}\label{sec:inr}

Embedded systems make use of computer units to control physical devices so that the behavior of the controlled devices meets expected requirements. They  have become ubiquitous in our modern life.
 How to design correct embedded systems is a grand challenge for computer science and control theory.
Model-driven development (MDD) is considered as an effective way of developing correct complex embedded systems, and
has been successfully applied in industry \cite{HS06,Lee00}. In the framework of MDD, a formal model of the system to be developed is defined at the
beginning; then extensive analysis and verification are conducted based on the formal model so that errors can be detected and corrected at the very
early stage of the design of the system. Afterwards, model transformation techniques are applied to transform the abstract formal model into lower
level models, even into source code. Hybrid systems (HSs) combine discrete mode changes with continuous evolutions specified in the form of differential
equations. With mathematically precise semantics, HSs can serve as an appropriate model of embedded systems \cite{MMP92,ACHH93}.

In the past, analysis and verification of HSs are mainly done through directly computing reachable sets, either by model-checking (e.g.,
\cite{ACH95,PV94,HH95}) or by
 decision procedures (e.g., \cite{LPY01}). The basic idea is to partition the state space of a considered system into
  finite many equivalent classes, or represent to finite many computable sets according to
  the solutions of the ODEs of the system. Since there is only a very small class of ODEs with closed form solutions, the scalability of these approaches is very restricted, only applicable to very specific linear HSs.
To deal with more complicated systems, a deductive method has been recently proposed and successfully applied in practice \cite{PlatzerClarke08,PlatzerClarke09}.
The most challenging part of a deductive method is how to discover invariants,
which hold at all reachable states of the system. For technical reason,
people only consider how to synthesize inductive invariants, which are preserved by all discrete and continuous transitions.
In general, a safety property itself is an invariant, but not an inductive invariant.
Obviously, an inductive invariant is an approximation of the reachable set, which may be discovered
according to the ODEs, rather than their solutions.
The basic idea is as follows: first, predefine a property template (linear or non-linear, depending on the property to be verified);
 then, encode the conditions of a property to be inductive (discretely and/or continuously) into some constraints on
 state variables and parameters;
 finally, find out solutions to the constraints.
 So, how to define inductiveness conditions and the power of constraint solving are essential in these
 approaches.

 Many approaches have been proposed  following the line discussed above. E.g.,
 in \cite{JM98,rodriguez2005}, the authors independently proposed different approaches
 for constructing inductive invariants for linear HSs; S. Sankaranarayanan et al presented a computational method to automatically
 generate algebraic invariants for algebraic HSs in \cite{sankaranarayanan2010,SSM04a},
 based on the theory of pseudo-ideal over polynomial ring and quantifier elimination;
S. Prajna in \cite{PJ04,prajna2007} provided a new notion of
inductive invariants called \emph{barrier certificates} for verifying the safety of semi-algebraic HSs with stochastic setting using the technique of sum-of-squares (\textbf{SOS});
while in \cite{PlatzerClarke08}, Platzer and Clarke extended the idea of \emph{barrier certificates} \ by considering
  boolean combinations of multiple polynomial inequalities;
   In \cite{GT08,TT09},  S. Gulwani et al investigated how to generate
   inductive invariants with more expressiveness for semi-algebraic HSs by relaxing the inductiveness conditions
   by considering inductiveness on the boundaries of predefined invariant templates; while in \cite{LZZ11}, Liu et al considered how to further relax
   the inductiveness condition given in \cite{GT08,TT09} and first gave a complete method on how to
   generate semi-algebraic invariants for semi-algebraic HSs.
  In \cite{sloth2012}, C. Solth at el proposed an approach to constructing global inductive invariant
   from local differential invariants using optimization technique.

   The aforementioned approaches can be classified into two categories: symbolic computation based approaches like
    \cite{JM98,rodriguez2005,sankaranarayanan2010,SSM04a,PlatzerClarke08,GT08,TT09,LZZ11}, and numeric computation
    based approaches like \cite{PJ04,prajna2007,sloth2012}. In general, the former can
    synthesize more expressive invariants, but their efficiencies are very low; in contrast, the efficiency of
    the latter is very high, normally in  \emph{polynomial time}
      as only {\sdp} is used, but the expressiveness of
      synthesized invariants is restrictive.
     In \cite{kong2013}, Kong et al investigated how to synthesize more expressive barrier certificates
      by proposing \emph{exponential barrier certificate condition}, which is
       a relaxed inductiveness condition, but still keeps the convexity of {\bc}s. Therefore, more expressive
       barrier certificates can be synthesized efficiently according to their condition still by {\sdp}.

In this paper, firstly, following Kong et al's line,
	 in the prerequisite of keeping the convexity of barrier certificates so that
     {\sdp} is still applicable, we  discuss how to relax
     the condition of barrier certificates in a general way.
       Thus, one can utilize different weaker conditions flexibly to
       synthesize different kinds of barrier certificates with more expressiveness efficiently,
       which gives more opportunities to verify the considered system.
       In addition, we consider how to combine two functions together to form a combined barrier certificate to prove a safety property under consideration, whereas  neither of these two functions can be used as a barrier certificate separately,
        even according to any relaxed conditions.  Another contribution of this paper is that we
      design algorithms to synthesize barrier certificates
     according to the general relaxed condition by {\sdp}. In particular, we
     focus on how to avoid the unsoundness of our approach caused by numerical errors in {\sdp}.

 The rest of the paper is organized as follows: Section~\ref{sec:pre}  introduces some basic notions;
 In Section~\ref{sec:bccondtion}, we discuss how to relax barrier certificate conditions, as well
  how to combine two functions to form a combined barrier certificate, but neither of them can be used as
   a barrier certificate separately;  Section~\ref{sec:alg} is devoted to
    how to synthesize barrier certificates according to relaxed conditions discussed above based on {\sdp};
    Section~\ref{sec:exp} provides some case studies as well as experimental results.
    Finally, we conclude this paper in Section~\ref{sec:con}.

\section{Preliminaries} \label{sec:pre}
In this section, we first introduce some basic notions, and then explain the basic idea of barrier certificates.

In what follows, we use $\RR$ to stand for the set of reals, $\CALC^\omega[\RR^n]$ for the set of analytic function
     from $\RR^n$ to $\RR$.

\subsection{Basic notions}
An autonomous continuous dynamical system (CDS)  is represented by a differential equation of the form
\begin{equation}
	\dot{\xx}=\fb(\xx)
	\label{eq:pro}
\end{equation}
where $\xx\in \RR^n$, and $\fb$ is a  vector function, called \emph{field vector},
   whose components are in $\CALC^\omega[\RR^n]$,
 and satisfy local Lipschitz condition\footnotemark. \footnotetext{Local Lipschitz condition
  guarantees the
existence and uniqueness of the solution of (\ref{eq:pro}) from any initial $\xx_0$.}
In the context of HSs, a CDS is normally equipped with a domain $D\subseteq \RR^n$ defining its state space
  and an initial set
  of states $\Xi$.

In this paper, we use hybrid automata  \cite{ACH95}  to model HSs,
 more models of HSs can be found in
\cite{ZWZ13}.

\begin{definition}[Hybrid Automata]\label{dfn:HS}\,
A {hybrid automaton} (HA) is a system $\hh\,\define\,(Q,X,f,D,E,G,R,\Xi)$,\, where \vspace*{-4mm}
\begin{itemize}
  \item $Q=\{q_1,\ldots,q_m\}$ is a finite set of discrete states (or modes); \vspace*{-2mm}
  \item $X=\{x_1,\ldots,x_n\}$ is a finite set of continuous state variables, with $\xx=(x_1,\ldots,x_n)$ ranging over $\RR^n$; \vspace*{-2mm}
  \item $f: Q\rightarrow (\RR^n\rightarrow \RR^n)$ assigns to each mode $q\in Q$ a locally Lipschitz continuous vector field $\fb_q$; \vspace*{-2mm}
  \item $D$ assigns to each mode $q\in Q$ a mode domain $D_q\subseteq \RR^n$; \vspace*{-2mm}
  \item $E\subseteq Q\times Q$ is a finite set of discrete transitions; \vspace*{-2mm}
  \item $G$ assigns to each transition $e\in E$ a switching guard $G_e\subseteq \RR^n$; \vspace*{-2mm}
  \item $R$ assigns to each  transition $e\in E$ a reset function $R_e$: $\RR^n \rightarrow {\RR^n}$; \vspace*{-2mm}
  \item $\Xi$ assigns to each $q\in Q$ a set of initial states $\Xi_q\subseteq \RR^n$. \vspace*{-5mm}
\end{itemize}
\end{definition}

For ease of presentation, we make the following assumptions: \vspace*{-4mm}
  \begin{itemize}
  \item for all $q\in Q$, $\fb_q$ is a polynomial vector function, so it  satisfies local Lipschitz condition, and thus the existence and uniqueness of solutions to $\dot \xx=\fb_q$ is guaranteed; \vspace*{-2mm}
  \item  for all $q\in Q$ and all $e\in E$, $\Xi_q$ is a \emph{semi-algebraic} set, $D_q$ and $G_e$ are \emph{closed semi-algebraic} sets\footnote{A subset $A\subseteq \RR^n$ is called \emph{semi-algebraic} if there is a quantifier-free polynomial formula $\varphi$ expressed in Tarski's algebra s.t.
$A=\{\xx\in\RR^n\mid\varphi(\xx)\,\,\mbox{is true}\}$\,.}. \vspace*{-2mm}
\end{itemize}

Given an HA $\hh$, a \emph{safety requirement} $S$ of $\hh$ assigns to each mode $q\in Q$
a safe region $S_q\subseteq \RR^n$, i.e. $S=\bigcup_{q\in
Q}(\{q\}\times  S_q)$. Dually, $\UU=\bigcup_{q\in
Q}(\{q\}\times (D_q-S_q))$ is called \emph{unsafe set}.
The \emph{reachable set} of $\hh$, denoted by $\RH$, consists of those $(q,\xx)$ for which there exists a finite sequence
$$(q_0,\xx_0),(q_1,\xx_1),\ldots,(q_l,\xx_l)$$
s.t. $(q_0,\xx_0)\in \Xi_{\hh}$, $(q_l,\xx_l)=(q,\xx)$, and for any $0\leq i\leq l-1$, one of the following two conditions holds: \vspace*{-4mm}
\begin{itemize}
  \item (Discrete Jump): $e=(q_i,q_{i+1})\in E$, \,$\xx_i\in G_e$ and $\xx_{i+1}=R_e(\xx_i)$; or \vspace*{-2mm}
  \item (Continuous Evolution): $q_i=q_{i+1}$, and there exists a $\delta\geq 0$ s.t. the solution $\xx(\xx_i;t)$ to $\dot \xx=\fb_{q_i}$ satisfies \vspace*{-2mm}
      \begin{itemize}
         \item $\xx(\xx_i;t)\in D_{q_i}$ for all $t\in [0,\delta]$; and \vspace*{-2mm}
         \item $\xx(\xx_i;\delta)=\xx_{i+1}$\,.
      \end{itemize}
\end{itemize}

\subsection{Barrier certificates}
 Given an HS $\hh$ and a safety property $S$ (dually, an unsafe set $\UU$), the problem we considered is
 if $\RE_{\hh} \subseteq S$ (dually, $\RE_{\hh} \cap \UU=\emptyset$).
  Obviously, it is equivalent to $\forall q\in Q. \RH\!\! \upharpoonright_q \subseteq S_q$
  (dually, $\forall q\in Q. \RH\!\! \upharpoonright_q \cap \UU_q=\emptyset$),
  where $\RH\!\! \upharpoonright_q$ stands for all continuous states of
    $\RH$ projecting onto $q$.
     For this problem on CDSs, Prajna et al in \cite{PJ04,prajna2007}
      used the idea of Lyapunov functions for stability analysis in control theory
      to separate safe states from unsafe states by a barrier function with convexity,
       called \emph{\bc}.
      According to their definition, a barrier function $\varphi(\xx) \in \CALC^\omega[\RR^n]$ satisfies
        the following conditions: \vspace*{-4mm}
     \begin{enumerate}
     \item[i)] $\varphi(\xx)\le0$ for any point $\xx\in \Xi_q$; \vspace*{-2mm}
     \item[ii)] $\varphi(\xx)>0$ for any point $\xx\in \UU_q$; and \vspace*{-2mm}
     \item[iii)]
	$\forall \xx\in D_q. {\LIDER}_{{\fb}_q} \varphi(\xx)\le0$, where ${\LIDER}_{{\fb}_q} \varphi(\xx)=\frac{\partial \varphi}{ \partial \xx} {\fb}_q(\xx)$ is the Lie derivative of $\varphi$ with respect to the
	vector field ${\fb}_q$. \vspace*{-4mm}
\end{enumerate}

 Trivially to see, the existence of a barrier certificate is just a sufficient condition to guarantee the safety property to be verified.  Hence, using Prajna et al's approach, \oomit{all generated barrier certificates are polynomials.}
 one cannot claim the property does not hold if he/she fails to discover a polynomial barrier certificate.
 Actually, as observed in \cite{kong2013} by Kong et al, if condition iii)  is relaxed to the following
  iii'), one can synthesize barrier certificates with more expressiveness.
  Certainly, it is more likely to prove a safety property by using a more expressive barrier certificate, as it gives
  a tighter approximation of the reachable set. \vspace*{-4mm}
    \begin{enumerate}
     \item[iii')]
	${\LIDER}_{{\fb}_q} \varphi(\xx) - \gamma \varphi(\xx) \le0$, where $\gamma$ is a real number.
\end{enumerate}


\section{Revisiting Barrier Certificate Conditions}
\label{sec:bccondtion}
In this section, we investigate how to relax the condition of barrier certificates in a general way.

\subsection{Relaxed barrier certificate conditions for CDSs}
First of all,  we consider how to relax the condition i)-iii) of barrier certificates given in \cite{PJ04,prajna2007} for CDSs
in a general way.  To the end, we need to have a principle to justify when 
 a relaxed condition of barrier certificates is reasonable.
 An obvious principle is:
 \begin{description}
\item[Principle of Barrier Certificate (PBC):] Given a CDS $\mathcal{D}$  equipped with an initial set $\II$ and an unsafe set $\UU$, a barrier certificate should be a real-valued function $\varphi(\xx)$  such that
  $\varphi(\xx)\le0$ for any $\xx \in \mathcal{R}_{\mathcal{D}}$,  and
    $\varphi(\xx)>0$ for any point $\xx \in \UU$.
\end{description}
  Certainly, if there exists such a function
$\varphi(\xx)$, we can assert that
$\mathcal{R}_{\mathcal{D}} \cap \UU=\emptyset$, and $\phi(\xx)\leq 0$ is an invariant.
 However, such a principle cannot be effectively
checked in general,
so we have to strengthen the condition to make it effectively checkable, like in \cite{PJ04,prajna2007,kong2013}.
An interesting  problem is with which condition more expressive barrier certificates can be synthesized, but 
 the condition is still effectively checkable and
    satisfies \textbf{PBC}. We answer the problem by the following theorem.

\begin{theorem}[General Barrier Condition (\textbf{GBC})] \label{the:1}
  Given a CDS $\mathcal{D}$ equipped with
a domain $D$, an initial set $\II$ and an unsafe set $\UU$, if there is  a   function $\varphi(\xx)\in \CALC^\omega[\RR^n]$, a real
function $ \psi(\xx) \in \CALC^\omega[\RR]$
   such that
	\begin{align*}
		\forall \xx\in \CALX_0. & \varphi(\xx)\le 0,  \numberthis \label{cond:1} \\
		\forall \xx\in D. & \LIDER_f\varphi(\xx)-\psi(\varphi(\xx))\le0, \numberthis\label{cond:2}\\
		\forall \xx\in \UU. & \varphi(\xx)>0, \numberthis \label{cond:3} \\
		\xi>0 \Rightarrow & \theta(\xx(\xi))\le 0,  \mbox{ where } \theta(\xx(t)) \mbox{ is the  solution of} \\
      & \left\{ \begin{array}{l}
		           \theta(\xx(0))\le0, \\
		    \LIDER_{\fb} \theta(\xx)-\psi(\theta(\xx))=0,
               \end{array} \right.
       \numberthis  \label{cond:4}
	\end{align*}
	then $\mathcal{R}_\mathcal{D} \cap \UU=\emptyset$.
\end{theorem}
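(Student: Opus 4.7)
The plan is a proof by contradiction driven by a scalar ODE comparison argument. Suppose $\mathcal{R}_\mathcal{D}\cap\UU\neq\emptyset$: then there is a trajectory $\xx:[0,T]\to D$ of $\dot\xx=\fb(\xx)$ with $\xx(0)\in\II$ and $\xx(T)\in\UU$ for some $T>0$. Set $u(t):=\varphi(\xx(t))$; conditions \eqref{cond:1} and \eqref{cond:3} immediately give $u(0)\le 0$ and $u(T)>0$, so it suffices to rule out such a sign change.

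Next I would turn condition \eqref{cond:2} into a scalar differential inequality along this trajectory. Since $\xx(t)\in D$ for all $t\in[0,T]$, the chain rule and \eqref{cond:2} yield
\begin{equation*}
\dot u(t)=\LIDER_\fb\varphi(\xx(t))\le\psi(\varphi(\xx(t)))=\psi(u(t)),\qquad t\in[0,T].
\end{equation*}
Let $v:[0,T]\to\RR$ be the maximal solution of the scalar Cauchy problem $\dot v=\psi(v)$, $v(0)=u(0)$; it is unique on its interval of existence because $\psi\in\CALC^\omega[\RR]$ is locally Lipschitz.

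The core of the argument is the standard ODE comparison lemma: from $\dot u\le\psi(u)$, $\dot v=\psi(v)$, $u(0)=v(0)$, and local Lipschitzness of $\psi$, one obtains $u(t)\le v(t)$ throughout the common interval of existence. Applying this on $[0,T]$ gives $u(T)\le v(T)$. Now invoke \eqref{cond:4}: the function $v(t)$ coincides, by construction, with the quantity $\theta(\xx(t))$ described there along the trajectory $\xx(\cdot)$, and its initial value satisfies $v(0)=u(0)\le 0$. Hence \eqref{cond:4} forces $v(T)\le 0$. Combining, $u(T)\le v(T)\le 0$, contradicting $u(T)>0$; therefore $\mathcal{R}_\mathcal{D}\cap\UU=\emptyset$.

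The main obstacle I anticipate is handling the existence of $v$ cleanly on the whole of $[0,T]$, i.e., ruling out finite-time blow-up of $\dot v=\psi(v)$ from the nonpositive initial value $u(0)$. Hypothesis \eqref{cond:4} supplies the upper bound $v\le 0$ a priori, so the only obstruction would be $v\to-\infty$; a short continuation-of-solutions argument on the Lipschitz intervals $[-M,0]$ for each $M>0$ closes this gap, after which the classical Gr\"onwall-style comparison finishes the proof.
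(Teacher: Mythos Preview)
Your proposal is correct and follows the same overall architecture as the paper: set $u(t)=\varphi(\xx(t))$, compare it with the solution $v$ of the scalar ODE $\dot v=\psi(v)$ with the same initial value, conclude $u\le v\le 0$ from \eqref{cond:4}, and reach a contradiction with \eqref{cond:3}.

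The difference lies in how the comparison $u(t)\le v(t)$ is obtained. You invoke the standard ODE comparison lemma, which applies because $\psi\in\CALC^\omega[\RR]$ is locally Lipschitz; this is clean and uses only a $C^1$-level fact about $\psi$. The paper instead proves the inequality by hand: it sets $\mu=\inf\{t:u(t)>v(t)\}$, notes $u(\mu)=v(\mu)$ by continuity, and then uses analyticity of $u,v,g:=\dot u-\psi(u)$ to analyse successive derivatives at $\mu$, deriving a contradiction in each case (either $g(\xx(\mu))<0$, or all derivatives agree so $u\equiv v$ by analyticity, or some derivative of $g$ is positive near $\mu$, contradicting $g\le 0$). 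Your route is shorter and shows that the $\CALC^\omega$ hypothesis on $\varphi,\psi$ is not really needed for this step---local Lipschitzness of $\psi$ suffices---whereas the paper's argument is self-contained but leans on analyticity in an essential way. Your remark on finite-time blow-up of $v$ is also a point the paper leaves implicit; the reading of \eqref{cond:4} that the comparison solution exists for all $\xi>0$ is what both arguments ultimately rely on.
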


\begin{proof}
	Suppose $\xx_0\in \CALX_0$ and $\xx(t)$ is the corresponding solution of (\ref{eq:pro})
starting from $\xx_0$. Our goal is to  prove that for any function
	$\varphi(\xx(t))$ satisfying (\ref{cond:1})-(\ref{cond:4}), then
	\begin{equation}
		\forall \xi \ge0. \varphi(\xx(\xi))\le 0.
		\label{eq:sat}
	\end{equation}
	Let $g(\xx)=\LIDER_{\fb} \varphi(\xx)-\psi(\varphi(\xx))$, then by (\ref{cond:2})
	\begin{equation}
		\forall \xx\in \RR^n. g(\xx)\le0
		\label{eq:fung}
	\end{equation}
	Since $\frac{d\varphi(\xx(t))}{dt}=\frac{\partial \varphi}{\partial \xx} \frac{d\xx}{dt}= \frac{\partial \varphi}{\partial \xx} f(\xx)=\LIDER_f
	\varphi(\xx)
	$, we have
	\begin{equation}
		\left\{
			\begin{array}{l}
				\frac{d\varphi(\xx(t)) }{dt}-\psi(\varphi(\xx(t)))-g(\xx(t))=0\\
				\varphi(\xx(0))=\varphi(\xx_0)
			\end{array}\right.
			\label{eq:ode}
		\end{equation}
		Assume $\varphi(\xx(\xi))>0$, for some $\xi>0$.
		Let $\theta(\xx(t))$ be a function with
     \begin{equation}
			\left\{
				\begin{array}{l}
					\frac{d\theta(\xx(t)) }{dt}-\psi(\theta(\xx(t)))=0\\
					\theta(\xx(0))=\varphi(\xx_0)
				\end{array}\right.
				\label{eq:ode1}
			\end{equation}
			Let $\Theta=\{\xi\ \mid  \varphi(\xx(\xi))> \theta(\xx(\xi)),\xi\ge0 \}$.
			By (\ref{cond:4}), $\forall \xi>0. \theta(\xx(\xi))\le0$.  $\Theta$ is nonempty since the assumption. So there is a number $\mu$ s.t. $\mu=\inf(\Theta)$.
      Obviously, $\varphi(\xx(t))$, $\theta(\xx(t))$, $g(\xx(t)), \frac{d\varphi(\xx(t)) }{dt}$ and
			$\frac{d\theta(\xx(t)) }{dt}$ are analytic functions  w.r.t. $t$.  Thus
			$\varphi(\xx(\mu))=\theta(\xx(\mu))$. If
			$ g(\xx(\mu))<0$, then  $ \frac{d\varphi(\xx(t)) }{dt}|_{t=\mu}<\frac{d\theta(\xx(t)) }{dt}|_{t=\mu}$. Hence, $\exists \nu.\nu>\mu \wedge
			\forall \xi \in (\mu, \nu).$  $ \frac{d\varphi(\xx(t)) }{dt}|_{t=\xi}<\frac{d\theta(\xx(t)) }{dt}|_{t=\xi}$.
  Thus,
			$\forall \xi \in (\mu, \nu). \varphi(\xx(\xi))<\theta(\xx(\xi))$,  which  contradicts to the definition of $\mu$. So $g(\xx(\mu))=0$ and
			$\frac{d\varphi(\xx(t)) }{dt}|_{t=\mu}=\frac{d\theta(\xx(t)) }{dt}|_{t=\mu}$.  If there is a $k>1$ s.t.
			$\frac{d^k\varphi(\xx(t))}{dt^k}|_{t=\mu}<\frac{d^k\theta(\xx(t))}{dt^k}|_{t=\mu}$, and $\forall i<k,
			\frac{d^i\varphi(\xx(t))}{dt^i}|_{t=\mu}=\frac{d^i\theta(\xx(t))}{dt^i}|_{t=\mu} $,  then there is $\nu_1>\mu$ s.t.
			$\varphi(\xx(\xi))<\theta(\xx(\xi))$ for any $\xi \in (\mu,\nu_1)$, which contradicts to the definition of $\mu$. If
			$\forall k>1. \frac{d^k\varphi(\xx(t))}{dt^k}|_{t=\mu}=\frac{d^k\theta(\xx(t))}{dt^k}|_{t=\mu}$, then $\varphi(\xx(\xi))=\theta(\xx(\xi))$ for any $\xi \in
			\RR^+$, since $\varphi,\theta$ are analytic functions. So, the claim has been proved.
			Suppose for some $k>1$,
			$\frac{d^k\varphi(\xx(t))}{dt^k}|_{t=\mu}>\frac{d^k\theta(\xx(t))}{dt^k}|_{t=\mu}$ and $
			\forall i<k. \frac{d^i\varphi(\xx(t))}{dt^i}|_{t=\mu}=\frac{d^i\theta(\xx(t))}{dt^i}|_{t=\mu}$.  For all $i<k$, we simultaneously compute the $i$th derivatives of   the two sides of the first formulas of (\ref{eq:ode})  and
		 (\ref{eq:ode1}), and obtain
   $\frac{d^{i}\psi(\varphi(\xx(t)))
		}{dt^i}|_{t=\mu}=\frac{d^i\psi(\theta(\xx(t)))} {dt^i}|_{t=\mu},$ $ \frac{d^ig(\xx(t) ) }{dt^i }|_{t=\mu} =0$ for  $i<k-1$,  and \oomit{$ \frac{d^{k-1}\psi(\varphi(\xx(t)))
	}{dt^{k-1}}|_{t=\mu}=\frac{d^{k-1}\psi(\theta(\xx(t)))} {dt^{k-1}}|_{t=\mu},$}  $ \frac{d^{k-1}g(\xx(t))}{dt^{k-1}}|_{t=\mu}>0$. Thus, there is an $\delta>\mu$ s.t. $\forall \xi \in (\mu,\delta). g(\xx(\xi))>0$, which contradicts to the definition of $g(\xx)$. This completes the proof.
		\end{proof}

From now on, we call $\varphi$ in Theorem~\ref{the:1} a \bc \  of  $\mathcal{D}$. \vspace*{-4mm}

		\begin{remark}
			\label{ex:1}
\begin{itemize}
\item The application of Theorem~\ref{the:1} includes the following two steps: i) look for a function $\psi$ which satisfies condition (\ref{cond:4}); ii) similar to the work in \cite{kong2013},
     synthesize \bc \ according to the resulted conditions of
     (\ref{cond:1})-(\ref{cond:3}) by instantiating $\psi$  with the function obtained in the first step. \vspace*{-1mm} 
\item All barrier certificates that can be synthesized using the existing approaches  can also be synthesized according to
these conditions by instantiating $\psi$ to some specific functions satisfying condition (\ref{cond:4}).
  For instance, \textit{convex condition} in \cite{prajna2007} and \textit{differential invariant} in \cite{PlatzerClarke08}
  correspond to $\psi( \varphi)=0$, while
			\ec\ in \cite{kong2013} corresponds to  $\psi(\varphi)=\alpha\varphi$, where $\alpha \in \RR$. \vspace*{-4mm} 
\end{itemize}
		\end{remark}

The following lemma indicates that we can find a class of functions $\psi$ different from existing ones, satisfying condition (\ref{cond:4}). Thus, from which we can construct a class of relaxed conditions of barrier certificates by \textbf{GBC},
   that can be used to generate barrier certificates with different expressiveness.

		\begin{lemma} If
			 \begin{align*}
				\left\{ \begin{array}{l}
                 \frac{\partial \theta}{\partial t}-\alpha \theta-\beta\theta^2=0,   \\
				 \theta(0) \le0,
                   \end{array} \right. \numberthis \label{cond:10}
			    \end{align*}
  where $\alpha<0, \beta \in \RR$, then
			$\forall \xi>0. \, \theta(\xi)\le 0$.
			\label{lem:1}
		\end{lemma}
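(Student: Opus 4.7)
The plan is to argue by uniqueness of solutions to the scalar ODE $\theta' = \alpha\theta + \beta\theta^2$. The right-hand side is a polynomial in $\theta$ with no explicit $t$-dependence, hence locally Lipschitz, so the Picard--Lindel\"of theorem supplies a unique maximal solution from each initial condition. Crucially, the constant function $\theta \equiv 0$ is itself a solution, since both $\alpha \cdot 0 + \beta \cdot 0^2 = 0$ and $\frac{\partial}{\partial t} 0 = 0$.

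I would then split into two cases according to $\theta(0)$. If $\theta(0) = 0$, uniqueness immediately gives $\theta(\xi) \equiv 0$, so in particular $\theta(\xi) \le 0$. If $\theta(0) < 0$, I would argue by contradiction: suppose some $\xi > 0$ has $\theta(\xi) > 0$. By the intermediate value theorem applied to the continuous function $\theta$, there must exist $\xi_0 \in (0,\xi)$ with $\theta(\xi_0) = 0$. But then $\theta$ and the zero function are two solutions of the ODE that agree at $\xi_0$; by uniqueness they coincide throughout the maximal interval of existence, contradicting $\theta(0) < 0$. Hence $\theta(\xi) < 0$ for every $\xi > 0$ in the domain of $\theta$.

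The main delicate point -- and presumably where the hypothesis $\alpha < 0$ enters -- is ensuring that $\theta(\xi)$ is actually defined for all $\xi > 0$. The Bernoulli equation can in principle blow up to $-\infty$ in finite time (e.g. when $\beta < 0$ and $|\theta(0)|$ is large). The stabilising linear term with $\alpha < 0$ drags trajectories back toward the equilibrium at $0$, which, combined with $\theta(0) \le 0$, keeps the trajectory from ever crossing into $\{\theta > 0\}$; even in the worst case where the solution escapes to $-\infty$, the conclusion $\theta(\xi) \le 0$ still holds on the domain where $\theta$ is defined, so the lemma's statement is not threatened.

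An alternative, more hands-on route would be to integrate the Bernoulli ODE explicitly via the substitution $u = 1/\theta$ (valid where $\theta \neq 0$), which linearises it to $u' + \alpha u = -\beta$ with closed form $u(t) = -\beta/\alpha + (u(0) + \beta/\alpha)e^{-\alpha t}$, and then to inspect the sign of $\theta(t) = 1/u(t)$ in subcases depending on the sign of $\beta$ and of $u(0) + \beta/\alpha$. This yields an explicit formula but multiplies the case analysis, so I would prefer the uniqueness argument above for its brevity and for mirroring the analytic-comparison style already used in the proof of Theorem~\ref{the:1}.
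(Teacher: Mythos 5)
Your proof is correct, but it takes a genuinely different route from the paper's. The paper splits on the sign of $\beta$: for $\beta\le 0$ it observes $\frac{\partial\theta}{\partial t}-\alpha\theta=\beta\theta^2\le 0$ and delegates to Theorem 1 of \cite{kong2013} (the exponential-condition comparison $\theta(t)\le\theta_0e^{\alpha t}$), while for $\beta>0$ it integrates the Bernoulli equation explicitly by partial fractions, obtaining the closed form $\theta=\bigl(\frac{1}{1-\frac{\theta_0}{\lambda+\theta_0}e^{\alpha t}}-1\bigr)\lambda$ with $\lambda=\alpha/\beta$ and reading off the sign from the inequalities $0\le\frac{\theta_0}{\lambda+\theta_0}<1$ and $\lambda<0$. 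Your argument instead notices that $\theta\equiv 0$ is an equilibrium of the autonomous, locally Lipschitz field $\alpha\theta+\beta\theta^2$, so by Picard--Lindel\"of uniqueness combined with the intermediate value theorem a trajectory starting in $\{\theta\le 0\}$ can never reach $\{\theta>0\}$. This is shorter, sidesteps the implicit division by $\theta(\lambda+\theta)$ in the paper's separation of variables, and is strictly more general: it applies verbatim to any $\psi\in\CALC^\omega[\RR]$ with $\psi(0)=0$ (indeed $\psi(0)\le 0$ suffices), which is directly relevant to the paper's stated goal of building a library of admissible $\psi$'s for condition (\ref{cond:4}); it also reveals that $\alpha<0$ plays no role in the non-positivity conclusion. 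What the explicit integration buys in exchange is quantitative information about the trajectory --- the closed form underlies Fig.~\ref{fig:1} and the expressiveness heuristics of Remark~\ref{remark:1} --- which your qualitative argument does not provide. Your remark on finite-time escape to $-\infty$ (possible when $\beta<0$ and $\theta(0)$ is sufficiently negative) is a point the paper passes over silently, and your resolution --- the conclusion is only claimed, and only needed, on the maximal interval of existence --- is the right one.
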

		\begin{proof}
			If $\beta \leq 0$, then from  (\ref{cond:10}) we have
			\begin{align*}
				\frac{\partial \theta}{\partial t} - \alpha \theta = \beta \theta^2 \leq  0 \numberthis \label{cond:11}
			\end{align*}
			So, the claim is guaranteed by Theorem 1 in \cite{kong2013}.

    Now, suppose  $\beta > 0$.
			Let $\lambda\in \RR$ with $\beta \lambda = \alpha$, and ${\theta}_0 = {\theta} (0)$, then
			\begin{align*}
				  & \frac{\partial \theta}{\partial t}=\alpha \theta +\beta \theta^2 \\
				\Rightarrow ~~ & \frac{\partial \theta} { \alpha \theta +\beta\theta^2}=\partial t \\
				\Rightarrow ~~ & \frac{d \theta}{\theta(\lambda+\theta)}=\beta dt \\
				\Rightarrow ~~ & \frac{1}{\lambda} (\frac{d\theta}{\theta}-\frac{d\theta}{ \lambda+\theta  })=\beta dt \\
            \end{align*}
             \begin{align*}
				\Rightarrow ~~ & ln \frac{\theta}{\lambda + \theta} = \lambda \beta t + c_0 = \alpha t + c_0 \\
				\Rightarrow ~~ & \frac{\theta}{\lambda + \theta} = {e} ^{\alpha t + c_0} \\
				\Rightarrow ~~ & \frac{\theta}{\lambda + \theta} = \frac{{\theta}_0}{\lambda + {\theta}_0} {e} ^{\alpha t} \\
				\Rightarrow ~~ & \theta = ( \frac{1}{1-\frac{{\theta}_0}{\lambda + {\theta}_0} e^{\alpha t}} -1) \lambda \numberthis \label{cond:12}
			\end{align*}
			As ${\theta}_0 \leq 0$, $\beta \lambda = \alpha$, $\beta >0$ and $\alpha <0$, we have
				$0 \leq \frac{{\theta}_0}{\lambda + {\theta}_0} < 1$ and  $e^{\alpha \xi} \leq 1$.
			So,
			\begin{align*}
				0 \leq \frac{{\theta}_0}{\lambda + {\theta}_0} e^{\alpha \xi} < 1,  \\
				\frac{1}{1-\frac{{\theta}_0}{\lambda + {\theta}_0} e^{\alpha \xi}} -1 \geq 0.
			\end{align*}
			By $ \beta \lambda = \alpha$, $\beta >0$ and $\alpha <0$,  it follows $\lambda <0$.
			From (\ref{cond:12}), we have
				$\forall \xi > 0. \theta ( \xi ) \leq 0 $.
		\end{proof}

\begin{remark}\label{remark:1}
			One can flexibly choose different relaxed conditions from the above class by setting
 different values to $\alpha$ and $\beta$ according to the following rules, that is illustrated in Fig. \ref{fig:1}:
  \begin{itemize}
   \item
   if  the value of $\alpha$  is smaller, then synthesized barrier certificates by the resulted
    condition from \textbf{GBC} are more expressive, and vice versa;
    \item if the value of $\beta$ is greater, then synthesized barrier certificates are more expressive, and
     vice versa.
   \end{itemize}
		\end{remark}

		\begin{figure}
			\centering
			\epsfig{file=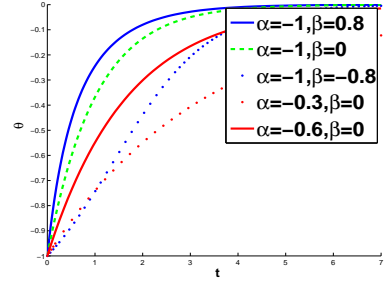,height=1.5in, width=2in}
			\caption{ Solutions of (\ref{cond:10}) with $\theta_0=-1$  on  different values of $\alpha,\beta$. }
			\label{fig:1}
		\end{figure}

The following example clearly indicates that one can synthesize some interesting barrier certificates with some relaxed
conditions from the above class, which cannot be discovered using the existing approaches.

\begin{example}  \label{ex:cont}
 Consider a CDS $\mathcal{D}_{\ref{ex:cont}}$ as follows: \\[2mm]
 \hspace*{1.5cm}  $\left\{ \begin{array}{l}
     \dot{x_1}=x_1^2-2x_1+x_2, \\
     \dot{x_2}=x_1+x_2^2-2x_2,
     \end{array} \right. $ \\[2mm]
  with $\II=\{(x_1,x_2) \mid 0.01-x_1^2-x_2^2\ge 0\},\UU=\{(x_1,x_2) \mid x_1^2+x_2^2-0.25\ge 0 \}$

        By Theorem \ref{the:1},  we can check that $\varphi=x_1^2+x_2^2-0.04$  is a barrier certificate
        w.r.t.
        $\psi(\theta)=-\theta+2\theta^2$ as follows: Let $g_0=0.01-x_1^2-x_2^2$, $g_1=x_1^2+x_2^2-0.25$.  Obviously, $-\varphi-g_0=0.03>0$, $\varphi-g_1=0.21>0$ and $-\LIDER_f(\varphi)-\varphi+2\varphi^2=2x^4
        - 2x^3 + 4x^2y^2 + 2.84x^2 - 4xy + 2y^4 - 2y^3 + 2.84y^2 + 0.0432$ is an \textbf{SOS},
         so the condition of Theorem \ref{the:1} is satisfied.

         On the other hand, we can show that
         there is no a barrier certificate $\varphi$ with $\\textit{deg}(\varphi)\le 2$ that can be synthesized by
         the condition given in \cite{kong2013}.
         Assume there is is a barrier certificate satisfying the condition of
             \cite{kong2013} of the form
              $$\varphi=a_{20}x_1^2+a_{11}x_1x_2+a_{02}x_2^2+a_{10}x_1+a_{01}x_2+a_{00}$$  w.r.t.
           $\psi(\theta)=\alpha\theta$, where $\alpha, a_{20}, a_{11}, a_{02}, a_{10}, a_{01}, a_{00} \in \RR$.
             Let $L=
            -\LIDER_f(\varphi)+\alpha\varphi$, so $L$ should be \textbf{SOS}.  From
            $\II$ and $\UU$, it follows that not all of  $a_{20},a_{11},a_{02}$ are equal to $0$. Suppose $a_{20}\neq 0$,
             then $L$ has a monomial $2a_{20}x_1^3$. Consider the value of $L$ over the set
              $\{ (\xi,0) \mid a_{20}\xi<0\}$, it will become negative when $|\xi|$ becomes large enough.
             Similarly, we can derive a contradiction in cases when $a_{11}\neq 0$ and $a_{02}\neq 0$.
             This means that our claim holds. \qed
 \end{example}

	\subsection{Combined barrier certificates} 
	Given a CDS $\mathcal{D}$ equipped with $D$, $\II$ and $\UU$, suppose $\varphi(\xx)$
  is a barrier certificate satisfying Theorem \ref{the:1} w.r.t. another function $\psi(\xx)$.
  Clearly, $\{\xx\ \mid\ \varphi(\xx)\le0\}$ is an over-approximation of $\RE_{\mathcal{D}}$,
   while $\{\xx\ \mid\ \varphi(\xx)> 0\}$ is an over-approximation of $\UU$.
   It is very common that in many cases we cannot find such a single barrier certificate to over-approximate
   the reachable set, but it can be achieved by combining several functions together.
   We call the combination of these functions a \emph{combined barrier certificate}.
   Actually, a similar problem on differential  invariants has been discussed in
	\cite{PlatzerClarke08,GT08,sankaranarayanan2010,LZZ11}.

Below, we discuss how to combine two
functions together to form a combined barrier certificate. For easing discussion, let's fix the aforementioned
CDS $\mathcal{D}$.

	\begin{lemma}
		\label{lem:init}
  $\{\xx\ \mid\ \chi(\xx)\le0\} $ is an over approximation of
		 $\RE_{\mathcal{D}}$, if
		\begin{align*}
			\forall \xx\in \CALX_0.\  & \chi(\xx)\le 0  \numberthis \label{cond:ib1} \\
			\forall \xx\in D.\ &  \LIDER_f \chi(\xx)-\psi(\chi(\xx))\le0 \numberthis\label{cond:ib2}\\
			\forall \xi. \, \xi>0 \Rightarrow& \theta(\xx(\xi))\le 0, \mbox{ where } \theta(\xx(t)) \mbox{ is the solution of }\\
           & \left\{ \begin{array}{l}
			\LIDER_f\theta(\xx)-\psi(\theta(\xx))=0, \\
             \theta(\xx(0))\le0,
              \end{array} \right.  \numberthis  \label{cond:ib4}
		\end{align*}
  where $\chi(\xx), \psi(\xx) \in \CALC^\omega[\RR^n]$.
	\end{lemma}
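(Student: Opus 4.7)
The plan is to adapt the argument given in the proof of Theorem~\ref{the:1}, since Lemma~\ref{lem:init} is exactly the portion of that theorem that establishes $\varphi(\xx(t)) \le 0$ along every trajectory starting in $\CALX_0$, dropping the role played by the unsafe set and condition (\ref{cond:3}). So I would fix an arbitrary $\xx_0 \in \CALX_0$, let $\xx(t)$ be the solution of $\dot\xx = \fb(\xx)$ starting from $\xx_0$ (which is unique and analytic in $t$ thanks to the Lipschitz and analyticity assumptions on $\fb$), and aim to show that $\chi(\xx(\xi)) \le 0$ for all $\xi \ge 0$. Together with the fact that any reachable state is of this form (there are no discrete jumps in a CDS), this is exactly the claim.

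First I would introduce the slack function $g(\xx) = \LIDER_f \chi(\xx) - \psi(\chi(\xx))$, which is nonpositive on $D$ by (\ref{cond:ib2}), and rewrite the evolution of $\chi$ along the trajectory as the scalar ODE $\frac{d}{dt}\chi(\xx(t)) = \psi(\chi(\xx(t))) + g(\xx(t))$ with initial value $\chi(\xx_0) \le 0$. Then I would compare this to the dominating ODE $\frac{d\theta}{dt} = \psi(\theta)$ with $\theta(\xx(0)) = \chi(\xx_0)$, whose solution satisfies $\theta(\xx(\xi)) \le 0$ for all $\xi > 0$ by (\ref{cond:ib4}).

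The core step is a comparison lemma of the Gronwall-type: I would assume for contradiction that the set $\Theta = \{\xi > 0 \mid \chi(\xx(\xi)) > \theta(\xx(\xi))\}$ is nonempty and let $\mu = \inf\Theta$. By continuity, $\chi(\xx(\mu)) = \theta(\xx(\mu))$. If $g(\xx(\mu)) < 0$, then $\chi$ has strictly smaller derivative than $\theta$ at $\mu$, contradicting $\mu = \inf\Theta$. Otherwise $g(\xx(\mu)) = 0$ and the two functions and all their derivatives at $\mu$ must match; using analyticity of $\chi(\xx(t))$, $\theta(\xx(t))$, $\psi$ and $g(\xx(t))$ in $t$, the same case analysis on the first index $k$ at which the derivatives differ, performed in the proof of Theorem~\ref{the:1}, forces either $\chi \equiv \theta$ or a local violation of the definition of $\mu$. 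Hence $\Theta = \emptyset$, so $\chi(\xx(\xi)) \le \theta(\xx(\xi)) \le 0$ for all $\xi > 0$, which is exactly what we want.

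The main obstacle is the analyticity argument at the crossing point $\mu$, in particular handling the case where $g(\xx(\mu)) = 0$ and all finite-order derivatives of $\chi$ and $\theta$ agree at $\mu$; but this is resolved exactly as in Theorem~\ref{the:1}, so Lemma~\ref{lem:init} really follows by isolating the invariance half of that proof.
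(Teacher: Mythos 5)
Your proposal is correct and matches the paper exactly: the authors' entire proof of Lemma~\ref{lem:init} is the one-line remark that ``it can be proved similarly to Theorem~\ref{the:1},'' and what you have written out is precisely that adaptation — isolating the invariance half of the comparison argument (the slack function $g$, the dominating ODE for $\theta$, the infimum $\mu$ of the crossing set, and the analyticity-based case analysis at $\mu$) while discarding condition (\ref{cond:3}) and the unsafe set.
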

	\begin{proof}
		It can be proved similarly to Theorem \ref{the:1}.
	\end{proof}

	\begin{lemma} \label{lem:box} If there are    functions $\varphi(\xx),\chi(\xx)  \in \CALC^\omega[\RR^n]$
   with $\forall \xx\in \CALX_0.\chi(\xx)\le0$,
    $\psi(\xx) \in \CALC^\omega[\RR]$,  and a \textbf{SOS} polynomial
		$\delta$ \footnotemark \footnotetext{That is, $\delta$ can be represented by $f_1^2+...+f_n^2$, where
        $f_1,\ldots,f_n$ are polynomials.} such that
		\begin{align*}
			\forall \xx\in \CALX_0.\ & \varphi(\xx)\le 0  \numberthis \label{cond:b1} \\
			\forall \xx\in D.\ & \LIDER_f\varphi(\xx)-\psi(\varphi(\xx)) -\delta \chi(\xx) \le0 \numberthis\label{cond:b2}\\
			\forall  \xx\in \UU.\ & \varphi(\xx) >0 \numberthis \label{cond:b3} \\
			\forall \xi.\xi>0 \Rightarrow & \theta(\xx(\xi))\le 0, \mbox{ where } \theta(\xx(t)) \mbox{ is the  solution of }\\
          & \left\{ \begin{array}{l}
            \LIDER_f\theta(\xx)-\psi(\theta(\xx))=0, \\
			\theta(\xx(0))\le0,
             \end{array} \right. \numberthis  \label{cond:b4}
		\end{align*}
		then for every  trajectory $\tau$ of $\mathcal{D}$, we have
     $$(\forall \xi \ge0. \chi(\tau(\xi))\le0) \Rightarrow (\forall \xi\ge 0. \tau(\xi) \not \in \UU).$$
	\end{lemma}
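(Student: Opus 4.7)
The plan is to mimic the proof of Theorem~\ref{the:1}, exploiting the hypothesis $\chi(\tau(\xi))\le 0$ to discharge the extra $\delta\chi$ term that appears in condition~(\ref{cond:b2}). Fix a trajectory $\tau(t)$ of $\mathcal{D}$ with $\tau(0)\in\CALX_0$, and assume $\chi(\tau(\xi))\le 0$ for every $\xi\ge 0$. First I would observe that since $\delta$ is an \textbf{SOS} polynomial, $\delta(\xx)\ge 0$ for all $\xx\in\RR^n$. Multiplying by $\chi(\tau(\xi))\le 0$ along the trajectory gives $\delta(\tau(\xi))\chi(\tau(\xi))\le 0$.

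Next I would combine this pointwise sign information with~(\ref{cond:b2}) evaluated along $\tau$, obtaining
\begin{equation*}
\LIDER_f\varphi(\tau(\xi))-\psi(\varphi(\tau(\xi)))\;\le\;\delta(\tau(\xi))\chi(\tau(\xi))\;\le\;0
\end{equation*}
for every $\xi\ge 0$. At this point the situation along $\tau$ is exactly the one treated in the proof of Theorem~\ref{the:1}: the restricted function $t\mapsto\varphi(\tau(t))$ starts at $\varphi(\tau(0))\le 0$ by~(\ref{cond:b1}), and its time derivative is dominated by $\psi(\varphi(\tau(t)))$ plus a nonpositive ``error'' $g(\tau(t))\mathrel{:=}\LIDER_f\varphi(\tau(t))-\psi(\varphi(\tau(t)))-\delta(\tau(t))\chi(\tau(t))$. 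Hypothesis~(\ref{cond:b4}) then plays exactly the role of (\ref{cond:4}) in Theorem~\ref{the:1}, supplying a comparison function $\theta$ with $\theta(\xx(\xi))\le 0$ for all $\xi>0$ and $\theta(\xx(0))=\varphi(\tau(0))$.

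I would then replay the comparison argument verbatim: assume for contradiction that $\varphi(\tau(\xi_0))>0$ for some $\xi_0>0$, let $\mu$ be the infimum of $\{\xi\ge 0\mid \varphi(\tau(\xi))>\theta(\xx(\xi))\}$, and use analyticity of $\varphi\circ\tau$, $\theta\circ\xx$ and $g\circ\tau$ together with the inequality $g(\tau(t))\le 0$ to derive a contradiction on the first nonzero derivative that distinguishes $\varphi\circ\tau$ from $\theta\circ\xx$ at $\mu$. This yields $\varphi(\tau(\xi))\le\theta(\xx(\xi))\le 0$ for all $\xi\ge 0$, and then~(\ref{cond:b3}) forces $\tau(\xi)\notin\UU$ for every $\xi\ge 0$, which is the desired conclusion.

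The only genuinely new ingredient compared with Theorem~\ref{the:1} is the ``sign-trick'' step, namely using $\delta\ge 0$ together with the standing assumption $\chi\circ\tau\le 0$ to reduce~(\ref{cond:b2}) to a clean Lie-derivative inequality along the trajectory; after that reduction, the comparison-principle argument is the main technical step but it is essentially copied from the proof of Theorem~\ref{the:1}, so the remaining work is bookkeeping rather than new mathematics.
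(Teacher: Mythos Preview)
Your proposal is correct and follows essentially the same approach as the paper: use $\delta\ge 0$ together with the standing hypothesis $\chi(\tau(\xi))\le 0$ to absorb the $\delta\chi$ term in (\ref{cond:b2}), reducing the situation along $\tau$ to the setting of Theorem~\ref{the:1}, and then invoke (or replay) that comparison argument. The paper simply cites Theorem~\ref{the:1} at that point rather than re-running the analytic comparison, but the substance is the same.
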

	\begin{proof}
		We only need to prove $\forall \xi\ge 0. \varphi(\tau(\xi))\le0$.
		\begin{align*}
		  & \forall \xx\in \RR^n.\ \LIDER_f\varphi(\xx)-\psi(\varphi(\xx)) -\delta \chi \le0 \\
			\Rightarrow & \forall \xi  \ge 0.  \  \frac{\partial \varphi(\tau(t)) } {\partial t }|_{t=\xi}  -\psi(\varphi(\tau(\xi))) -\delta
			\chi(\tau(\xi)) \le 0 \\
           &  \hspace*{1.5cm} \mbox{ as } \LIDER_f\varphi(\xx)=\frac{\partial \varphi(\tau(t)) } {\partial t } \\
		\Rightarrow & \forall \xi  \ge0. \ 	\frac{\partial \varphi(\tau(t)) } {\partial t }|_{t=\xi}  -\psi(\varphi(\tau(\xi))) \le0  \\
    &  \hspace*{1.5cm}  \mbox{ as }   \forall \xi
		 \ge0. \
		 \chi(\tau(\xi))\le0.
	\end{align*}
	 Thus, by Theorem \ref{the:1}, the claim is trivially true.
	\end{proof}

	\begin{theorem}
		Let $\chi(\xx) \in \CALC^\omega[\RR^n]$ satisfy
		(\ref{cond:ib1})-(\ref{cond:ib4}).   If there are    functions $\varphi(\xx)\in \CALC^\omega[\RR^n],\psi(\xx)\in \CALC^\omega[\RR]$,
		\ and a \textbf{SOS} polynomial
		$\delta$, s.t. (\ref{cond:b1})-(\ref{cond:b4}) hold, then $\RE_{\mathcal{D}} \cap \UU =\emptyset$.
		\label{the:box}
	\end{theorem}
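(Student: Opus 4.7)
The plan is to obtain Theorem~\ref{the:box} as an almost immediate composition of the two lemmas that precede it, so the real conceptual work has already been done. I would not re-prove anything from scratch; instead I would simply track what each hypothesis gives us along an arbitrary trajectory of $\mathcal{D}$ starting in $\II$.

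First I would fix an arbitrary $\xx_0\in\II$ and let $\tau(t)$ denote the corresponding solution of $\dot\xx=\fb(\xx)$ on its domain of existence in $D$. From the hypothesis that $\chi$ satisfies (\ref{cond:ib1})--(\ref{cond:ib4}), Lemma~\ref{lem:init} gives us that $\{\xx\mid\chi(\xx)\le 0\}$ over-approximates $\RE_{\mathcal{D}}$; specialising to the trajectory $\tau$ this means
\begin{equation*}
  \forall \xi\ge 0.\ \chi(\tau(\xi))\le 0.
\end{equation*}
This is the first step and it uses only the hypotheses on $\chi$.

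Second, I would feed precisely this conclusion into Lemma~\ref{lem:box}. The hypotheses on $\varphi$, $\psi$ and the SOS polynomial $\delta$ are exactly (\ref{cond:b1})--(\ref{cond:b4}), so Lemma~\ref{lem:box} applies and tells us that for every trajectory $\tau$,
\begin{equation*}
  \bigl(\forall \xi\ge 0.\ \chi(\tau(\xi))\le 0\bigr)\ \Longrightarrow\ \bigl(\forall \xi\ge 0.\ \tau(\xi)\notin\UU\bigr).
\end{equation*}
Combining with the first step yields $\tau(\xi)\notin\UU$ for all $\xi\ge 0$. Since $\xx_0\in\II$ was arbitrary, this shows $\RE_{\mathcal{D}}\cap\UU=\emptyset$, which is the conclusion.

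There is no real obstacle here: the two lemmas were designed to plug together in exactly this way. The only subtlety worth mentioning explicitly in the write-up is the role of $\delta\chi$ in (\ref{cond:b2}): a priori $\LIDER_f\varphi-\psi(\varphi)$ is not required to be non-positive on all of $D$, only modulo the non-negative slack $\delta\chi$ that is absorbed as soon as we know $\chi\le 0$ along trajectories. That is why Lemma~\ref{lem:init} must be invoked \emph{first}, to certify $\chi\le 0$ on trajectories, and only then can Lemma~\ref{lem:box} upgrade (\ref{cond:b2}) to the genuine barrier inequality and conclude safety.
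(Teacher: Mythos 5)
Your proposal is correct and follows exactly the paper's own route: the paper proves Theorem~\ref{the:box} by the same two-step composition, invoking Lemma~\ref{lem:init} to certify $\chi\le 0$ along all trajectories and then Lemma~\ref{lem:box} to conclude $\RE_{\mathcal{D}}\cap\UU=\emptyset$. Your explicit remark on why the order of the two lemmas matters (absorbing the $\delta\chi$ slack in (\ref{cond:b2})) is a helpful elaboration of what the paper leaves implicit.
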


 \begin{proof} It is straightforward by Lemmas \ref{lem:init}\&\ref{lem:box}.
	\end{proof}

We will call the pair $(\chi,\phi)$ a combined barrier certificate.

Clearly, a single barrier certificate defined in Theorem \ref{the:1} can be seen as
a specific combined barrier certificate by letting $\chi=0$.
In addition,  actually, it is easy to prove that 
a combined barrier certificate forms a combined differential  invariant.
 \begin{corollary}
		$\chi\le 0 \wedge \varphi\le 0$ is a \textit{differential invariant} (the definition can be found in \cite{PlatzerClarke08}) of $\mathcal{D}$, which can guarantee its safety.
		\label{cor:1}
	\end{corollary}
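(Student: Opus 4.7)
The plan is to decouple the two conjuncts and handle each with a lemma already in hand. First, I would fix an arbitrary trajectory $\tau$ of $\mathcal{D}$ starting from some $\xx_0\in\II$ and invoke Lemma \ref{lem:init} directly on $\chi$: since $\chi$ is assumed to satisfy (\ref{cond:ib1})--(\ref{cond:ib4}), the lemma gives $\chi(\tau(\xi))\le 0$ for all $\xi\ge 0$, i.e. the first conjunct is preserved along every such trajectory and it holds initially by (\ref{cond:ib1}).

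Next, feeding this information into Lemma \ref{lem:box} applied to $(\varphi,\delta,\psi,\chi)$, the conclusion of that lemma, combined with the initial bound $\varphi(\tau(0))\le 0$ from (\ref{cond:b1}), yields $\varphi(\tau(\xi))\le 0$ for all $\xi\ge 0$. Indeed, the intermediate step in the proof of Lemma \ref{lem:box} already establishes $\LIDER_{\fb}\varphi - \psi(\varphi)\le 0$ along $\tau$ once $\chi\le 0$ is known there, so invoking Theorem \ref{the:1} (as the lemma itself does) gives $\varphi(\tau(\xi))\le 0$ rather than merely the disjointness from $\UU$. Combining both, the formula $\chi(\xx)\le 0\wedge \varphi(\xx)\le 0$ holds at every initial state and is preserved along every continuous evolution, which matches the semantic criterion for a differential invariant in \cite{PlatzerClarke08}. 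Safety is then immediate: (\ref{cond:b3}) forces $\varphi>0$ on $\UU$, so the invariant rules out any reachable state from $\UU$.

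The only genuine subtlety, and the step that I would double-check most carefully, is the match between our semantic argument and the syntactic formulation of differential invariants in \cite{PlatzerClarke08}, which is typically phrased as a proof rule requiring inductivity of each atomic conjunct via its Lie derivative. The worry is that $\varphi\le 0$ is not inductive on its own, because (\ref{cond:b2}) only controls $\LIDER_{\fb}\varphi-\psi(\varphi)$ modulo the additive term $\delta\chi$. Within the conjunction this is harmless: $\delta$ is \textbf{SOS} so $\delta\ge 0$, and the context $\chi\le 0$ makes $\delta\chi\le 0$, restoring the required inductive inequality for $\varphi$. Making this \emph{contextual} inductiveness explicit -- essentially rephrasing the two-lemma decomposition inside the proof calculus of differential invariants -- is the only part of the argument that deserves care; everything else is bookkeeping on top of Lemmas \ref{lem:init} and \ref{lem:box}.
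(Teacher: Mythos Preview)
Your proposal is correct and follows essentially the same route the paper implicitly takes: the paper does not spell out a proof of Corollary~\ref{cor:1} at all, merely remarking that it is ``easy to prove'' on the basis of the combined barrier certificate construction, i.e., Lemmas~\ref{lem:init} and~\ref{lem:box}. Your two-step decomposition---first Lemma~\ref{lem:init} to get $\chi\le 0$ along every trajectory, then the intermediate inequality inside the proof of Lemma~\ref{lem:box} to get $\varphi\le 0$---is exactly the intended argument, and your added remark on the contextual inductiveness of $\varphi\le 0$ (via $\delta\chi\le 0$) is a useful clarification the paper omits.
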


\oomit{	\begin{corollary}
		Theorem \ref{the:1} is a special case of Theorem \ref{the:box} when set $\chi=0$.
		\label{cor:2}
	\end{corollary} }

	We use the following example to demonstrate the notion of combined barrier certificates
gives more  power to the verification of CDSs as well as HSs.

	\begin{example} Consider the following CDS $\mathcal{D}$
		\label{ex:4}
			\begin{equation*}
				\begin{bmatrix}
					\dot{x_1}           \\[0.3em]
					\dot{x_2}
				\end{bmatrix}	= \begin{bmatrix}
					2x_1-x_1x_2          \\[0.3em]
					2x_1^2-x_2\\[0.3em]
				\end{bmatrix}
			\end{equation*}
			with $\II=\{\xx\in \RR^2\ \mid \ x_1^2+(x_2+2) \le 1\}$ and $\UU=\{\xx\in \RR^2\ \mid x_2+(x_2-1)^2\le 0.09\}.$
	
 To prove its safety, by Theorem \ref{the:box}, we can synthesize a combined barrier certificate
   $(\chi,\varphi)$, see Fig. \ref{fig:6}, in which $\chi(\xx)=0$ is denoted by
    the red line and $\varphi(\xx)=0$ is denoted by the black line (their mathematical representations
    can be found in the appendix).
    In fact, we can prove  $\chi(\xx) \le0 \wedge \varphi(\xx)\le0 $ is indeed a differential invariant
    according to the definition given in \cite{PlatzerClarke08}, which can guarantee the unsafe set unreachable.

    Besides,  we can prove that neither of $\chi(\xx)$ nor $\varphi(\xx)$ is a barrier certificate
    in the sense of Theorem~\ref{the:1}. Furthermore, using the same values of  $\alpha,\beta$  and
    the degree bound as used in synthesizing the combined barrier certificate ($\chi(\xx), \varphi(\xx))$,
     we cannot obtain any single barrier certificate by Theorem~\ref{the:1}. \qed
	\begin{figure}
		\centering
		\epsfig{file=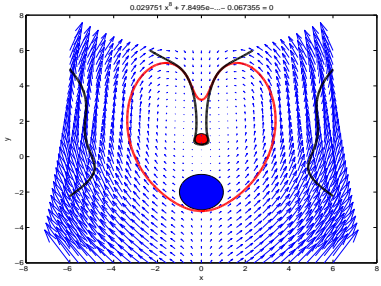,height=1.5in, width=2in}
		\caption{\small A combined barrier certificate for Example \ref{ex:4}}
		\label{fig:6}
	\end{figure}
\end{example}

	\subsection{Relaxed barrier certificate conditions  for HSs}
As discussed in \cite{kong2013},
the principle of the condition of 	
 barrier certificates $\Phi(\xx)$ for an HS $\hh = (Q, X, f, D, E, G, R, \Xi)$ w.r.t. a given
unsafe set $S^u$ should satisfy the following conditions:
\begin{itemize}
\item  $\Phi(\xx)$ consists of a set of functions $\{\varphi_q(\xx) \mid
	q\in Q\}$, each $\varphi_q(\xx)$ is a barrier certificate for CDS $\dot{\xx}=\fb_q$ equipped with
 the domain $D_q$, initial set $\Xi_q$ and unsafe set $S^u_q$;
 \item all the discrete transitions starting from every mode $q\in Q$ have to be
 taken into account in the
	barrier certificate condition
 so that $\Phi(\xx)$ can construct a global inductive invariant of $\hh$.
 \end{itemize}

Based on the discussions about barrier certificate conditions for CDSs as well as the above principle, we can accordingly
 revisit the condition of barrier certificates for HSs based on the following theorem:

	\begin{theorem} \label{the:2}
		Given an HS $\hh = (Q, X, f, D, E, G, R, \Xi)$ and an
unsafe set $S^u$, if there exists a set of non-negative real numbers
 $\{c_e \mid e\in E\}$, and a set of   functions
       $\{\varphi_q(\xx)\in \CALC^\omega[\RR^n] \mid q\in Q\}\cup  \{\psi_q(\xx)\in \CALC^\omega[\RR] \mid q\in Q\}$
       s.t.
		\begin{align*}
		&	\forall q\in Q \forall \xx\in \Xi_q.  \varphi_q(\xx)\le 0  \numberthis \label{cond:5} \\
		&	\forall q\in Q \forall \xx\in D_q.  \LIDER_{\fb_q}\varphi_q(\xx)-\psi_q(\varphi_q(\xx))\le0
                              \numberthis\label{cond:6}\\
		&	\forall q\in Q \forall \xx\in S^u_q. \varphi_q(\xx)>0 \numberthis \label{cond:7} \\
             \end{align*}
\begin{align*}
		&	\forall q\in Q \forall \xi. \xi>0 \Rightarrow  \theta_q(\xx(\xi))\le 0, \\
        & \hspace*{1.5cm} \mbox{ where } \theta_q(\xx(t)) \mbox{ is the  solution of } \\
   & \hspace*{.5cm}  \left\{ \begin{array}{l}
             \LIDER_{\fb_q} \theta_q(\xx)-\psi_q(\theta_q(\xx))=0, \\
             \theta_q(\xx(0))\le 0,
             \end{array} \right. \numberthis  \label{cond:8} \\
   & \forall e\in E \forall \xx \in G(e) \forall \xx'\in R(e)(\xx). \\
   &    \hspace*{1.5cm}  c_{e}\varphi_{S(e)}(\xx)-\varphi_{T(e)}(\xx')\ge 0,  \numberthis \label{cond:9}
		\end{align*}
     then $\RH \cap S^u =\emptyset$, where $S(e)$ and $T(e)$ respectively are
      the source and target modes of jump $e$.
	\end{theorem}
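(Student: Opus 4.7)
The plan is to prove the theorem by induction on the length $l$ of a reachability witness sequence, using Theorem~\ref{the:1} to handle continuous evolutions and condition~(\ref{cond:9}) to handle discrete jumps. More precisely, I will establish the stronger intermediate claim: for every $(q,\xx)\in\RH$ one has $\varphi_q(\xx)\le 0$. Combined with condition~(\ref{cond:7}), which forces $\varphi_q(\xx)>0$ on $S^u_q$, this immediately yields $\RH\cap S^u=\emptyset$.

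For the base case, any $(q_0,\xx_0)\in\Xi_\hh$ satisfies $\xx_0\in\Xi_{q_0}$, and condition~(\ref{cond:5}) gives $\varphi_{q_0}(\xx_0)\le 0$. For the inductive step, I assume $\varphi_{q_i}(\xx_i)\le 0$ and split on the type of step from $(q_i,\xx_i)$ to $(q_{i+1},\xx_{i+1})$. In the continuous-evolution case, $q_{i+1}=q_i$ and there is a solution $\xx(\xx_i;t)$ of $\dot\xx=\fb_{q_i}$ that stays in $D_{q_i}$ on $[0,\delta]$ with $\xx(\xx_i;\delta)=\xx_{i+1}$. Applying Theorem~\ref{the:1} to this mode --- with $\psi:=\psi_{q_i}$, conditions (\ref{cond:6}) and (\ref{cond:8}) supplying the analogues of (\ref{cond:2}) and (\ref{cond:4}), and the inductive hypothesis $\varphi_{q_i}(\xx_i)\le 0$ playing the role of (\ref{cond:1}) at the new ``initial'' state $\xx_i$ --- yields $\varphi_{q_i}(\xx(\xx_i;\xi))\le 0$ for all $\xi\in[0,\delta]$, and in particular $\varphi_{q_{i+1}}(\xx_{i+1})\le 0$. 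In the discrete-jump case, $e=(q_i,q_{i+1})\in E$ with $\xx_i\in G_e$ and $\xx_{i+1}=R_e(\xx_i)$, so condition~(\ref{cond:9}) gives $c_e\varphi_{q_i}(\xx_i)\ge\varphi_{q_{i+1}}(\xx_{i+1})$; since $c_e\ge 0$ and $\varphi_{q_i}(\xx_i)\le 0$ by hypothesis, we obtain $\varphi_{q_{i+1}}(\xx_{i+1})\le 0$.

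The only subtle point --- and the one place that requires care --- is the re-application of Theorem~\ref{the:1} in the continuous case. Theorem~\ref{the:1} was stated for an initial set $\CALX_0$ through the precondition $\varphi(\xx)\le 0$ on $\CALX_0$, but inspecting its proof shows that the conclusion $\varphi(\xx(\xi))\le 0$ depends only on the single inequality $\varphi(\xx_0)\le 0$ at the trajectory's starting point together with (\ref{cond:2}) and (\ref{cond:4}). Thus re-using it with $\{\xx_i\}$ as an ad-hoc initial set is legitimate. A second mild technicality is that (\ref{cond:6})--(\ref{cond:8}) are formulated on $D_{q_i}$, which is exactly the region traversed during the continuous step, so the Lie-derivative bound is available everywhere along the trajectory. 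With these observations, the induction closes. Finally, if for contradiction some $(q_l,\xx_l)\in\RH$ lay in $S^u$, then $\xx_l\in S^u_{q_l}$ forces $\varphi_{q_l}(\xx_l)>0$ by (\ref{cond:7}), contradicting the inductive conclusion, so $\RH\cap S^u=\emptyset$ as required.
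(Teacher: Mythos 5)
Your proof is correct. Note that the paper states Theorem~\ref{the:2} without giving any proof at all, so there is no argument of the authors' to compare yours against line by line; the induction on the length of the reachability witness sequence---condition~(\ref{cond:5}) for the base case, condition~(\ref{cond:9}) together with $c_e\ge 0$ for discrete jumps, and the continuous-evolution case handled by the mode-local barrier conditions---is exactly the standard argument such a theorem calls for and is evidently what the authors intend, given how $\RH$ is defined in Section~\ref{sec:pre}. You also correctly identify and resolve the only delicate point: Theorem~\ref{the:1} as \emph{stated} concludes $\RE_{\mathcal{D}}\cap\UU=\emptyset$ for a fixed initial set, so what you actually need is the intermediate claim~(\ref{eq:sat}) from its proof, namely that $\varphi(\xx(\xi))\le 0$ along any trajectory whose starting point satisfies $\varphi(\xx_0)\le 0$, which indeed uses only the single inequality at the starting point together with (\ref{cond:2}) and (\ref{cond:4}); instantiating this with the singleton $\{\xx_i\}$ as an ad-hoc initial set is legitimate, and your observation that the trajectory of a continuous step remains in $D_{q_i}$ by the definition of the reachable set is what licenses the use of (\ref{cond:6}) along the whole step (the paper's own proof of Theorem~\ref{the:1} is actually sloppier here, silently promoting (\ref{cond:2}) from $D$ to all of $\RR^n$ in (\ref{eq:fung})). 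The only cosmetic addition worth making is to note explicitly that every intermediate time $t\in[0,\delta]$ of a continuous step also yields a reachable state, so your invariant $\varphi_q(\xx)\le 0$ on all of $\RH$ is established for those points too; but since $\RH$ is defined via endpoints of finite sequences and any such intermediate point is itself an endpoint of a truncated sequence, your induction already covers this.
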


	Similarly, based on Theorem~\ref{the:box} and Theorem~\ref{the:2}, we can revisit the condition of combined
   barrier certificates for HSs as follows:
	\begin{theorem} \label{the:4}
    	Given an HS $\hh = (Q, X, f, D, E, G, R, \Xi)$ and an
unsafe set $S^u$, if there exists a set of non-negative real numbers
 $\{c_{e,1},c_{e,2},c_{e,3},c_{e,4} \mid e\in E\}$, a set of \textbf{SOS} polynomials $\{\delta_q \mid q\in Q\}$,
    and a set of functions
       $\{\varphi_q(\xx), \chi_q(\xx) \in \CALC^\omega[\RR^n] \mid q\in Q\}
        \cup \{ \psi_{q,1}(\xx), \psi_{q,2}(\xx)  \in \CALC^\omega[\RR] \mid q\in Q\}$
       s.t.
		\begin{align*}
		&	\forall q\in Q \forall \xx\in \Xi_q. \chi_q(\xx) \le 0  \numberthis \label{cond:bn10} \\
		&	\forall q\in Q \forall \xx\in D_q. \LIDER_{\fb_q}\chi_q(\xx)-\psi_{q,1}(\chi_q(\xx))\le0
               \numberthis\label{cond:bn11}\\
		 & \forall q\in Q \forall \xx\in \Xi_q. \varphi_q(\xx) \le 0  \numberthis \label{cond:b10} \\
		&	\forall q\in Q \forall \xx\in D_q. \LIDER_{\fb_q}\varphi_q(\xx)-\psi_{q,2}(\varphi_q(\xx)) -\delta_q\chi_q \le0 \numberthis\label{cond:b11}\\
			& \forall q\in Q \forall \xx\in S^u_q. \varphi_q(\xx)>0 \numberthis \label{cond:b12} \\
		 & \forall q\in Q \forall \xi. \xi>0 \Rightarrow \theta_q(\xx(\xi))\le 0,  \\
        & \hspace*{1.5cm} \mbox{ where } \theta_q(\xx(t)) \mbox{ is the  solution of} \\
        & \hspace*{.5cm} \left\{ \begin{array}{l}
                \LIDER_{\fb_q} \theta_q(\xx)-\psi_{q,1}(\theta_q(\xx))=0, \\
                \theta_q(\xx(0))\le0,
                 \end{array} \right.  \numberthis  \label{cond:b13} \\
         & \forall q\in Q \forall \xi. \xi>0 \Rightarrow \theta_q'(\xx(\xi))\le 0, \\
        & \hspace*{1.5cm} \mbox{ where } \theta_q'(\xx(t)) \mbox{ is the  solution of} \\
        & \hspace*{.5cm} \left\{ \begin{array}{l}
                \LIDER_{\fb_q} \theta_q(\xx)-\psi_{q,2}(\theta_q'(\xx))=0, \\
                \theta_q'(\xx(0))\le0,
                 \end{array} \right.  \numberthis  \label{cond:bn13} \\
		& \forall e\in E \forall \xx \in G(e)\forall \xx'\in R(e)(\xx). \\
       & \hspace*{1.5cm} c_{e,1}\varphi_{S(e)}(\xx)-\varphi_{T(e)}(\xx')\ge0, \numberthis \label{cond:bn14} \\
       & \forall e\in E \forall \xx \in G(e)\forall \xx'\in R(e)(\xx). \\
       & \hspace*{1.5cm} c_{e,2}\varphi_{S(e)}(\xx)-\chi_{T(e)}(\xx')\ge0, \numberthis \label{cond:b14} \\
  	  & \forall e\in E \forall \xx \in G(e)\forall \xx'\in R(e)(\xx). \\
       & \hspace*{1.5cm} c_{e,3}\chi_{S(e)}(\xx)-\varphi_{T(e)}(\xx')\ge0, \numberthis \label{cond:bn15} \\
       & \forall e\in E \forall \xx \in G(e)\forall \xx'\in R(e)(\xx). \\
       & \hspace*{1.5cm} c_{e,4}\chi_{S(e)}(\xx)-\chi_{T(e)}(\xx')\ge0, \numberthis \label{cond:b15}
		\end{align*}
	 then $\RH \cap S^u =\emptyset$, where $S(e)$ and $T(e)$ are respectively
      the source and target modes of the jump $e$.	
	\end{theorem}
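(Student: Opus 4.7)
The plan is to lift the continuous-time result of Theorem \ref{the:box} to the hybrid setting by induction on the length of the hybrid execution, in exact parallel with how Theorem \ref{the:2} generalises Theorem \ref{the:1}. Fix an arbitrary reachable state $(q,\xx) \in \RH$ together with a witnessing finite sequence $(q_0,\xx_0),(q_1,\xx_1),\dots,(q_l,\xx_l)=(q,\xx)$ with $\xx_0 \in \Xi_{q_0}$. I would prove by induction on $l$ the strengthened statement that both $\chi_{q_l}(\xx_l)\le 0$ and $\varphi_{q_l}(\xx_l)\le 0$. The base case is immediate from (\ref{cond:bn10}) and (\ref{cond:b10}) applied at $\xx_0 \in \Xi_{q_0}$.

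For the inductive step I would split on the type of the last transition. If it is a discrete jump $e=(q_l,q_{l+1})\in E$ with $\xx_l \in G_e$ and $\xx_{l+1} \in R_e(\xx_l)$, then the induction hypothesis $\varphi_{q_l}(\xx_l)\le 0$ and $\chi_{q_l}(\xx_l)\le 0$ combined with the non-negativity of the constants $c_{e,1},c_{e,4}$ in (\ref{cond:bn14}) and (\ref{cond:b15}) gives $\varphi_{q_{l+1}}(\xx_{l+1}) \le c_{e,1}\varphi_{q_l}(\xx_l)\le 0$ and $\chi_{q_{l+1}}(\xx_{l+1}) \le c_{e,4}\chi_{q_l}(\xx_l)\le 0$; conditions (\ref{cond:b14}) and (\ref{cond:bn15}) are redundant for the propagation but tighten the feasible region for synthesis. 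If instead the last step is a continuous evolution inside mode $q=q_l=q_{l+1}$ with $\xx_{l+1}=\xx(\xx_l;\delta)$ and $\xx(\xx_l;t)\in D_q$ for $t\in[0,\delta]$, then I would apply Lemma \ref{lem:init} to the CDS $\dot{\xx}=\fb_q$ using $\xx_l$ in place of an initial point of $\II$: conditions (\ref{cond:bn11}) and (\ref{cond:b13}) together with $\chi_q(\xx_l)\le 0$ give $\chi_q(\xx(\xx_l;t))\le 0$ for all $t\in[0,\delta]$. With that in hand, Lemma \ref{lem:box} applied using (\ref{cond:b11}) and (\ref{cond:bn13}) yields $\varphi_q(\xx(\xx_l;t))\le 0$ throughout the segment; specialising to $t=\delta$ closes the step.

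Once the induction is complete, we know that $\varphi_q(\xx)\le 0$ at every reachable $(q,\xx)$, so condition (\ref{cond:b12}) prevents any reachable state from lying in $S^u_q$, i.e.\ $\RH\cap S^u=\emptyset$.

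The main obstacle is not conceptual but a matter of re-parameterising Lemmas \ref{lem:init} and \ref{lem:box}: as stated they fix a single initial set $\II$, whereas in the inductive step the ``initial point'' $\xx_l$ is produced by an earlier jump. A clean way around this is to observe that the proof of Theorem \ref{the:1}, and hence of both lemmas, only uses the initial condition pointwise — it compares $\varphi(\xx(t))$ with the solution $\theta(\xx(t))$ of the associated ODE starting from the single value $\varphi(\xx_l)\le 0$. I would therefore either invoke an elementary corollary of Lemmas \ref{lem:init}--\ref{lem:box} in which $\II$ is replaced by an arbitrary state with $\chi\le 0$ (respectively $\varphi\le 0$), or repeat the short comparison-function argument of Theorem \ref{the:1} on the segment $[0,\delta]$. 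Everything else is bookkeeping on top of the combined-barrier result already established for CDSs.
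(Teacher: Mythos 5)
Your proposal is correct and follows exactly the route the paper intends: the paper gives no explicit proof of Theorem~\ref{the:4}, merely asserting that it follows from Theorem~\ref{the:box} and Theorem~\ref{the:2}, and your induction on the length of the hybrid execution (propagating both $\chi_q\le 0$ and $\varphi_q\le 0$ through jumps via (\ref{cond:bn14}) and (\ref{cond:b15}) and through continuous segments via Lemmas~\ref{lem:init} and~\ref{lem:box}) is precisely that argument made explicit. Your closing observation that the lemmas must be re-read with an arbitrary nonpositive starting value in place of $\II$ is the only real technical point, and you resolve it correctly since the comparison argument in the proof of Theorem~\ref{the:1} is indeed pointwise in the initial condition.
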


\section{Discovering Relaxed Barrier Certificates by \sdp}\label{sec:alg}
Theorems~\ref{the:1}\&\ref{the:box} (respt. Theorems \ref{the:2}\&\ref{the:4}) provide relaxed conditions which 
  can guarantee a function (a combination of two functions) to be a (combined) barrier certificate for a CDS (resp. an HS),  but these theorems  do not provide any constructive method to synthesizing (combined) barrier certificates.
In this section, we discuss how to exploit {\sdp} techniques \cite{Parrilo00,Parrilo03}  to construct (combined) barrier certificates
 from these relaxed conditions, which is inspired by previous work e.g. \cite{JM98,PJ04,prajna2007,
sloth2012,kong2013}.

Thus, we briefly review  {\sdp} first.

	\subsection{{\sdp}}
	We use  $\textit{Sym}_n$ to denote the set of  $n\times n$ real symmetric matrices, and $\textit{deg}(f)$  the highest total degree  of $f$ for a given polynomial
	$f$.

	\begin{definition}[Positive semidefinite matrices] A matrix $M\in \textit{Sym}_n$ is called \emph{positive semidefinite}, denoted by  $M\succeq 0$, if
		$\xx^TM\xx\ge 0$  for all $\xx\in \RR^n$.
	\end{definition}

	\begin{definition}[Inner product]\label{def:inner}
		The {\em inner product} of two matrices $A=(a_{ij}),B=(b_{ij})\in \RR^{n\times n}$, denoted by
		$\left<A,B\right>$, is defined by
		$\textit{Tr}(A^TB)=\sum _{i,j=1}^na_{ij}b_{ij}$.
	\end{definition}

	\begin{definition}[Semidefinite programming ({\sdp})]\label{def:sdp}
		The standard (primal) and dual forms of a {\sdp} are respectively given in the following:
		\begin{eqnarray}
			p^* &=& \inf_{X\in \textit{Sym}_n}\left<C,X\right> \mbox{ s.t. } X\succeq 0,\ \left<A_j,X\right>=b_j\ \label{eq:primal} \\
                & & ~~~~~~~~~~~~~~~~~~~~~~ (j=1,\ldots,m)  \nonumber \\
			d^* &= & \sup_{ y\in \RR^m} \mathbf{b}^T\yy \  \mbox{ s.t.} \ \sum_{j=1}^m y_jA_j +S=C,\  S \succeq 0,
			\label{eq:dual}
		\end{eqnarray}
		where  $C,A_1,\ldots,A_m,S\in \textit{Sym}_n$ and $\mathbf{b}\in \RR^m$.
	\end{definition}

	There are many  efficient algorithms to solve
	{\sdp} such as interior-point method. We present a basic path-following algorithm to solve (\ref{eq:primal}) in Algorithm~\ref{alg:inter}.

	\begin{definition}[Interior point for \sdp]
		\begin{eqnarray*}
			\textit{intF}_p & = & \left\{ X: \left<A_i,X\right>=b_i\ (i=1,\ldots,m),\ X\succ 0 \right\}, \\
			\textit{intF}_d & = & \left\{ (\mathbf{y},S): S=C-\sum_{i=1}^mA_iy_i\succ 0 \right\}, \\
			\textit{intF} & = & \textit{intF}_p\times \textit{intF}_d.
		\end{eqnarray*}
	\end{definition}

	Obviously, $\left<C,X \right> -\mathbf{b}^T\yy=\left<X,S\right> \geq 0 $ for all $(X,\yy,S)\in \textit{intF} $. Especially, we have $ d^*\le p^*$. So the soul of interior-point method to compute $p^*$
	is to reduce $\left<X,S\right>$ incessantly and meanwhile guarantee $(X,\yy,S)\in \textit{intF}$.
	\begin{algorithm}[!htb] \label{alg:inter}
		\SetKwData{Left}{left}\SetKwData{This}{this}\SetKwData{Up}{up}
		\SetKwFunction{Union}{Union}\SetKwFunction{FindCompress}{FindCompress}
		\SetKwInOut{Input}{input}\SetKwInOut{Output}{output}
		\Input{ $C$, $A_j, b_j\ (j=1,\dots,m)$ as in (\ref{eq:primal}) and a threshold $c$ }
		\Output{ $p^* $ }
		\SetAlgoLined
		\BlankLine
		Given a $(X,\yy,S)\in \textit{intF}$ with $XS=\mu I$\;
		\tcc{ $\mu$ is a positive constant and $I$ is the identity matrix. }
		\While{$\mu>c $  }{
			$\mu=\gamma\mu$\;
			\tcc{$\gamma$ is a fixed positive constant less than one}
			use  {\bf Newton iteration} to solve $(X,\yy,S) \in \textit{intF}$ with $XS=\mu I$\;
		}
		\caption{ {\tt Interior\_Point\_Method}
	}
	\end{algorithm}

\subsection{Symbolic checking} \label{sec:pro}

Please be noted that because of the error caused by numeric computation in {\sdp}, in particular,
  a threshold $c$ upon which {\sdp} depends, it may happen that the (combined) barrier certificates computed by {\sdp} are not
  real ones, or some real (combined) certificates satisfying the condition cannot be computed or are determined as
  false ones.
   For example, considering Example \ref{ex:4}, if we encode the condition derived from Theorem \ref{the:1} as a {\sdp}, then call SOSTOOLS\footnote{SOSTOOLS is of version v2.04 with MATLAB R2011b.}  \cite{prajna2005}, and obtain the output is: \\[2mm]
 \hspace*{1cm}  \textsf{ `` feasratio: 1.0000;  pinf: 0;  dinf: 0;
   numerr: 0''.} \\[2mm]
    This indicates that the tool does discover a barrier certificate.
   However, after showing the result in Fig. \ref{fig:8}, it is easy to find that the black line in Fig. \ref{fig:8} does not  satisfy condition (\ref{cond:2}), as
    some vectors cross it into the area which contains unsafe set.

   So, we have to take the numerical error into account when using these {\sdp} tools.  Our experience is: \vspace*{-4mm}
   \begin{itemize}
   \item   larger the size of matrix $X$  is, larger the error due to {\sdp}, so it is more likely to
      obtain a false (combined) barrier certificate; \vspace*{-1mm}
   \item higher the degree of undetermined
   polynomials as predefined templates of barrier certificates is,
   larger the error due to {\sdp}; \vspace*{-1mm}
   \item one can synthesize combined barrier certificates with lower degrees  by Theorem~\ref{the:box}
      than by Theorem \ref{the:1}. \vspace*{-2mm}
 \end{itemize}

	\begin{figure}
		\centering

		\begin{overpic}[width=2in,height=1.5in]{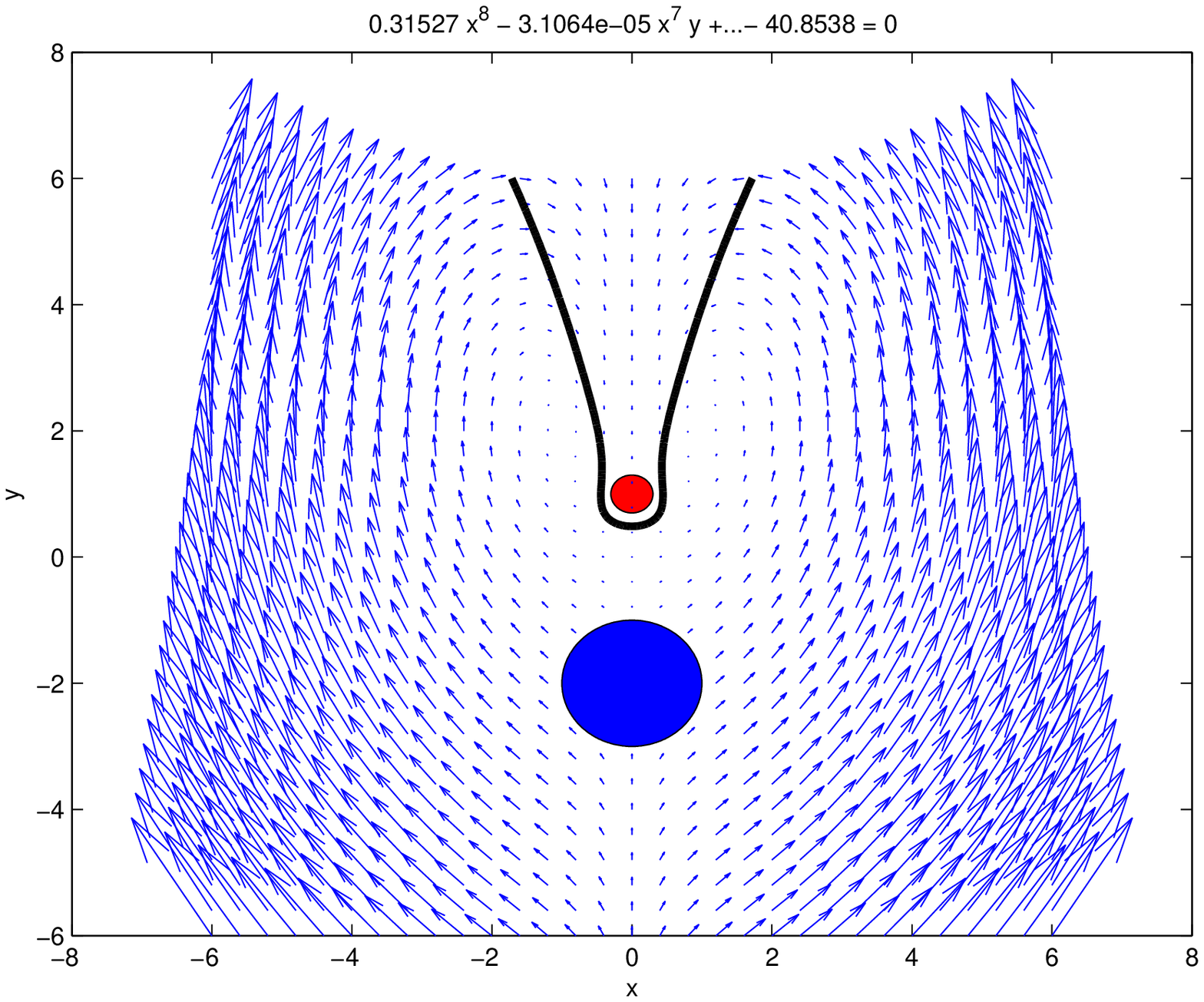}
			\put(65,43){\includegraphics[width=20mm,height=20mm]{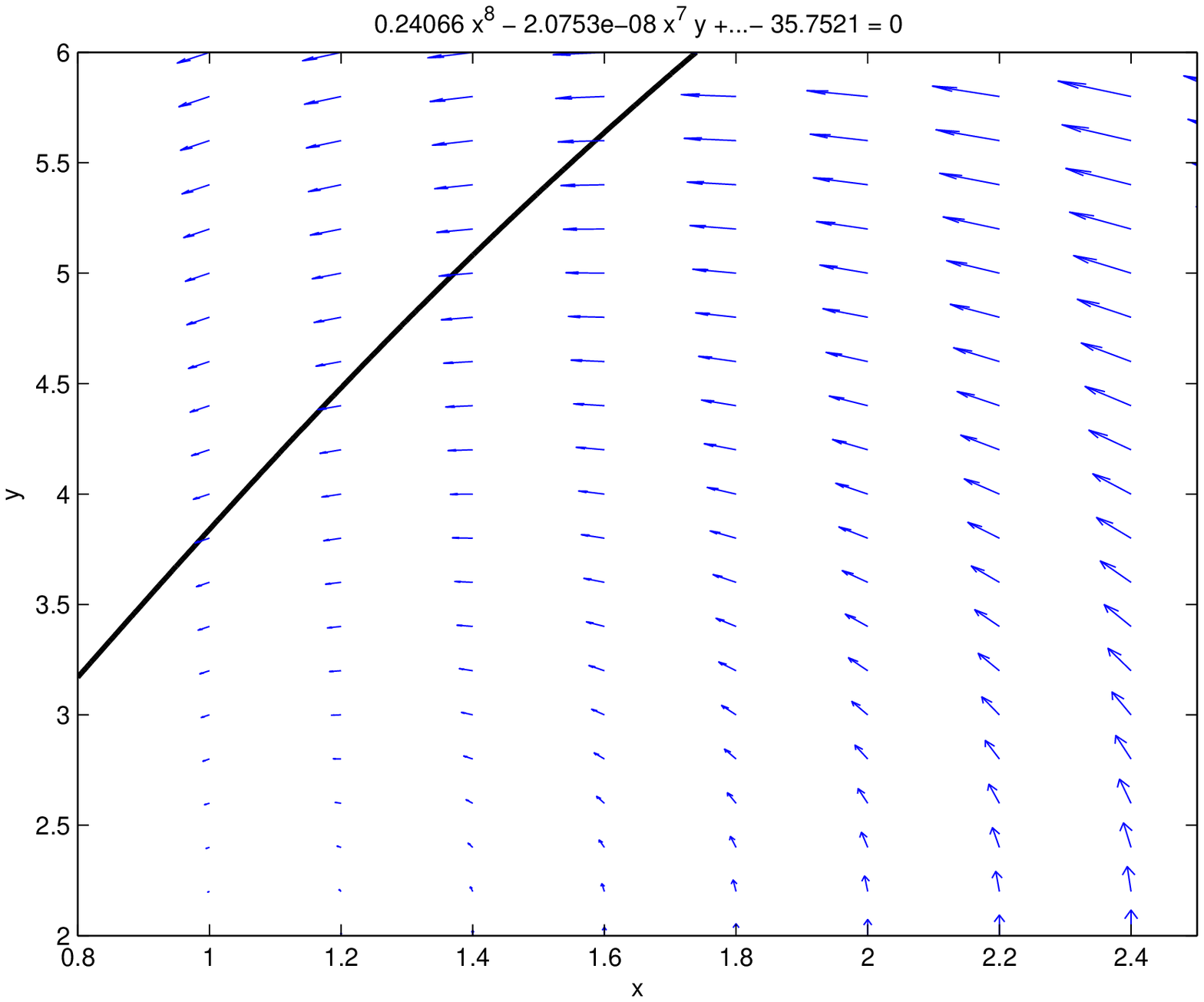}}
		\end{overpic}
		\caption{A false barrier certificate of Example~\ref{ex:4} due to numeric errors}
		\label{fig:8}
	\end{figure}

It is  absolutely necessary to guarantee the soundness of the approaches to
  the verification of HSs. But the approach based on {\sdp} to synthesize (combined) barrier certificates
  according to these relaxed conditions, may be unsound because of the error caused by
  numeric computation. Below, we advocate to apply symbolic computation techniques to check
   if the synthesized (combined) barrier certificates are real ones, which is hinted by our previous work \cite{DXZ2013}.
  \begin{problem}
   For $f\in \RR[\xx]$,  if $\forall \xx\in \RR^n.f(\xx)\ge0$ ?
  \label{pro:1}
 \end{problem}
 Checking the constraints in Theorems \ref{the:1}\&\ref{the:box}\&\ref{the:2}\&\ref{the:4} are obviously instances of
 Problem \ref{pro:1}.  A lot of work has been done on Problem \ref{pro:1}. We choose an
 exact method based on an improved Cylindrical Algebraic Decomposition(CAD) algorithm \cite{cpsd} for the checking, and
 call the  tool  \psd\  in the experiments, which
 implements the algorithms in \cite{cpsd}. The \psd\ returns {\bf True} when the input polynomial is positive semidefinite and {\bf False} otherwise.

 \begin{remark}
 One may doubt the efficiency of the above symbolic checking since the complexity of CAD is $O(2^{2^n})$ in general, where $n$ is the number of variables. However, please note that
 Problem \ref{pro:1} is a special case of quantifier elimination. One of the main contributions of \cite{cpsd} is an improved algorithm for solving
 Problem \ref{pro:1}. Although the improved algorithm cannot be proved with a lower complexity theoretically, it has been shown that it does avoid
 many heavy resultant computation. So, in practice, especially in the case where the number of variables are greater than 2, \psd \  is much faster
 than any general CAD tool. Please see \cite{cpsd} for details. In our experience, \psd\ can finish checking in few seconds when $\textit{deg}(f)$ is no larger than 6 and the number of variables in $f$ is less than $5$, which is enough for many problems.
 \end{remark}

\subsection{Algorithms}
We can sketch the basic steps of the algorithm to construct (combined) barrier certificates using {\sdp}  as follows: \vspace*{-4mm}
\begin{description}
\item[Step 1:] predefine parametric polynomial templates with a degree bound as possible candidates of (combined) barrier certificates; \vspace*{-1mm}
\item[Step 2:] derive constraints on the parameters of these parametric polynomial templates from the considered relaxed barrier condition;\vspace*{-1mm}
\item[Step 3:] reduce all the constraints on the parameters to a {\sdp};\vspace*{-1mm}
\item[Step 4] apply some {\sdp} solver to solve the resulted {\sdp} and obtain instantiations of
      these parameters. \vspace*{-4mm}
\end{description}
In the above procedure, for most of the constraints on parameters, we only need to
consider  how to reduce $p \ge 0$  ($p \le0$) to $p= \delta$ ($-p= \delta$),
 where $p$ is a polynomial and $\delta$ is a undetermined  \textbf{SOS} polynomial. In the literature,
  there is lot of work on this, please refer to  \cite{JM98,PJ04,prajna2007,
sloth2012,kong2013,DXZ2013} for the detail.

The hardest part is how to reduce the constraints that contain $\psi$, $\chi$, $\psi_q$,
  $\psi_{q,i}$, or $\chi_q$, as they may contain the product of
two or more parametric polynomials after replacement, which result in non-linear expressions on parameters, that cannot be seen
as a {\sdp} any more. For instance, let $\psi=\theta+\theta^2$, and $\theta = ax_1+bx_2$ be a template of barrier certificates.
 By Theorem~\ref{the:1}, the constraint derived from condition (\ref{cond:2}) will contain expression
  $(ax_1+b x_2)+(ax_1+bx_2)^2$, which cannot be reduced to a {\sdp} directly.

  To address this issue, we explore the iterative approach proposed in \cite{PJ04} which can handle
  a constraint containing the product of two parametric polynomials.
  We demonstrate the basic idea of the iterative approach by presenting Algorithm~\ref{alg:iterater} based on
  which for the following problem.

	\begin{problem} \label{pro:2}
		Suppose $\CALX_0,\UU,\fb,\psi$ are given, where $\psi$  satisfies (\ref{cond:4}), our goal is
	  to find a $\varphi$ which satisfies (\ref{cond:1})- (\ref{cond:3}).
	\end{problem}

	\begin{algorithm}[!htb] \label{alg:iterater}
		\SetKwData{Left}{left}\SetKwData{This}{this}\SetKwData{Up}{up}
		\SetKwFunction{Union}{Union}\SetKwFunction{FindCompress}{FindCompress}
		\SetKwInOut{Input}{input}\SetKwInOut{Output}{output}
		\Input{ $\CALX_0, \UU, \fb, \psi(\theta)=\sum_{i=0}^sa_i\theta^i$, where $\psi(\theta)$ satisfies (\ref{cond:4})}
		\Output{ $\theta'$ which satisfies (\ref{cond:1})-(\ref{cond:3}) }
		\SetAlgoLined
		\BlankLine
		$\theta'=0$\;
		$j=0$\;
		\While {$ j\le s$ }{
			$\psi'=\sum_{i=0}^ja_i\theta \theta'^{i-1} $\;
			Use a {\sdp} tool to solve the resulted Problem~\ref{pro:2} by replacing
                 $\psi$ with $\psi'$\;
			Denote the result of the above step by $\theta'$\;
			$j=j+1$\;
		}

		\caption{ {\tt Iterative Algorithm for Problem \ref{pro:2}}
	}
	\end{algorithm}

\section{Experimental Results} \label{sec:exp}
In this section, we demonstrate our approach by some examples.

		\begin{example}[modify example of \cite{kong2013new}] \label{ex:31}
			Consider a CDS $\mathcal{D}_{\ref{ex:31}}$ as follows:
			\begin{equation*}
				\begin{bmatrix}
					\dot{x_1}           \\[0.3em]
					\dot{x_2}
				\end{bmatrix}	= \begin{bmatrix}
					2x_1-x_1x_2          \\[0.3em]
					2x_1^2-x_2\\[0.3em]
				\end{bmatrix}
				\label{eq:kong13}
			\end{equation*}
			with  $\Xi_0=\{\xx\in \RR^2\ \mid \ x_1^2+(x_2+2) \le 1\}$ and $\UU=\{\xx\in \RR^2\ \mid x_2+(x_2-5.2)^2\le 0.81\}$.

		 No polynomial barrier certificates can be synthesized using the existing approaches for the
verification of $\mathcal{D}_{\ref{ex:31}}$, except for the one in \cite{kong2013new} with which a polynomial barrier certificate $\varphi(\xx)$ of degree $8$  was discovered.
        By setting $\alpha=-4$ and $\beta=1.5$, by the corresponding relaxed condition by \textbf{GBC},
        it is easy to synthesize a polynomial barrier
		certificate of degree $6$, see Fig. \ref{fig:3} (also see the appendix).
      In contrast, by setting $\beta=0$, the corresponding resulted relaxed condition is
       degenerated to the case considered in \cite{kong2013new}. But unfortunately,
       we can not synthesize an appropriate barrier certificate from the conditions,
       see Fig. \ref{fig:2}. \qed
       
       \medskip  \medskip 

		\begin{figure}[ht]
                 \begin{minipage}[b]{0.45\linewidth}
				\centering

				\begin{overpic}[width=38mm,unit=1mm]{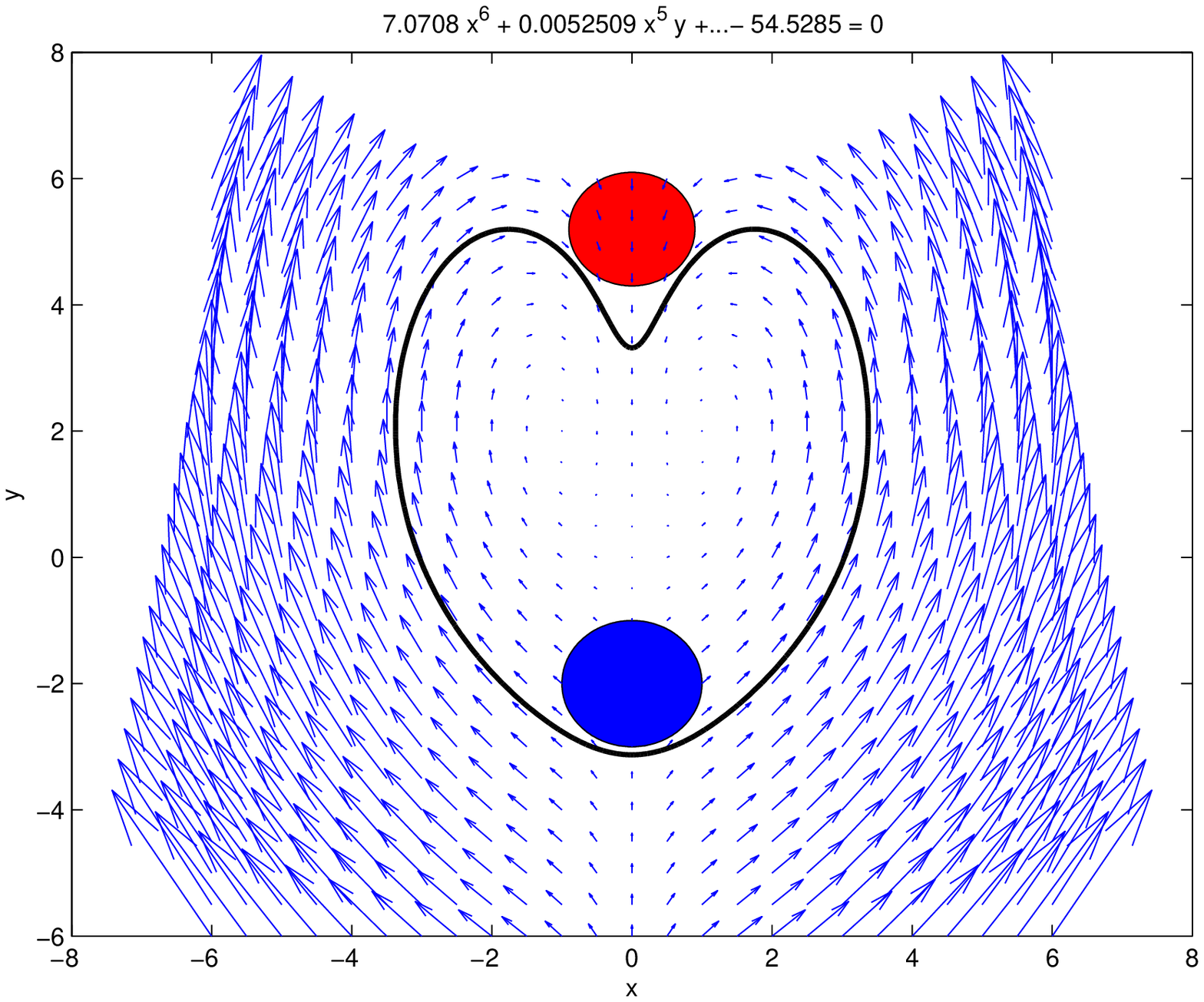}
					\put(60,50){\includegraphics[width=20mm]{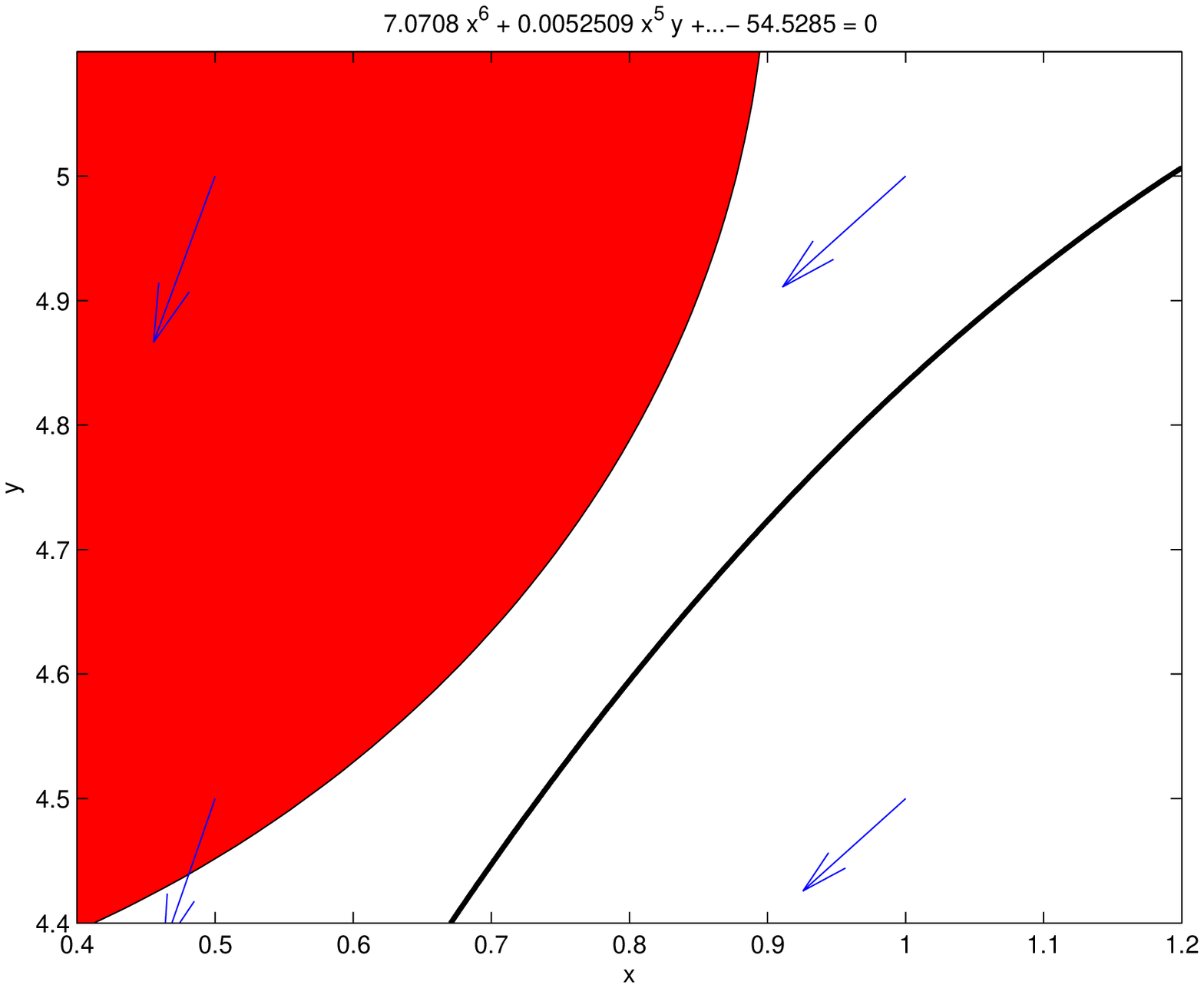}}
				\end{overpic}
				\caption{\small $\alpha=-4,\beta=1.5$}
				\label{fig:3}
			\end{minipage}
                \hspace{0.5cm}
			\begin{minipage}[b]{0.45\linewidth}
				\centering
				\begin{overpic}[width=38mm,unit=1mm]{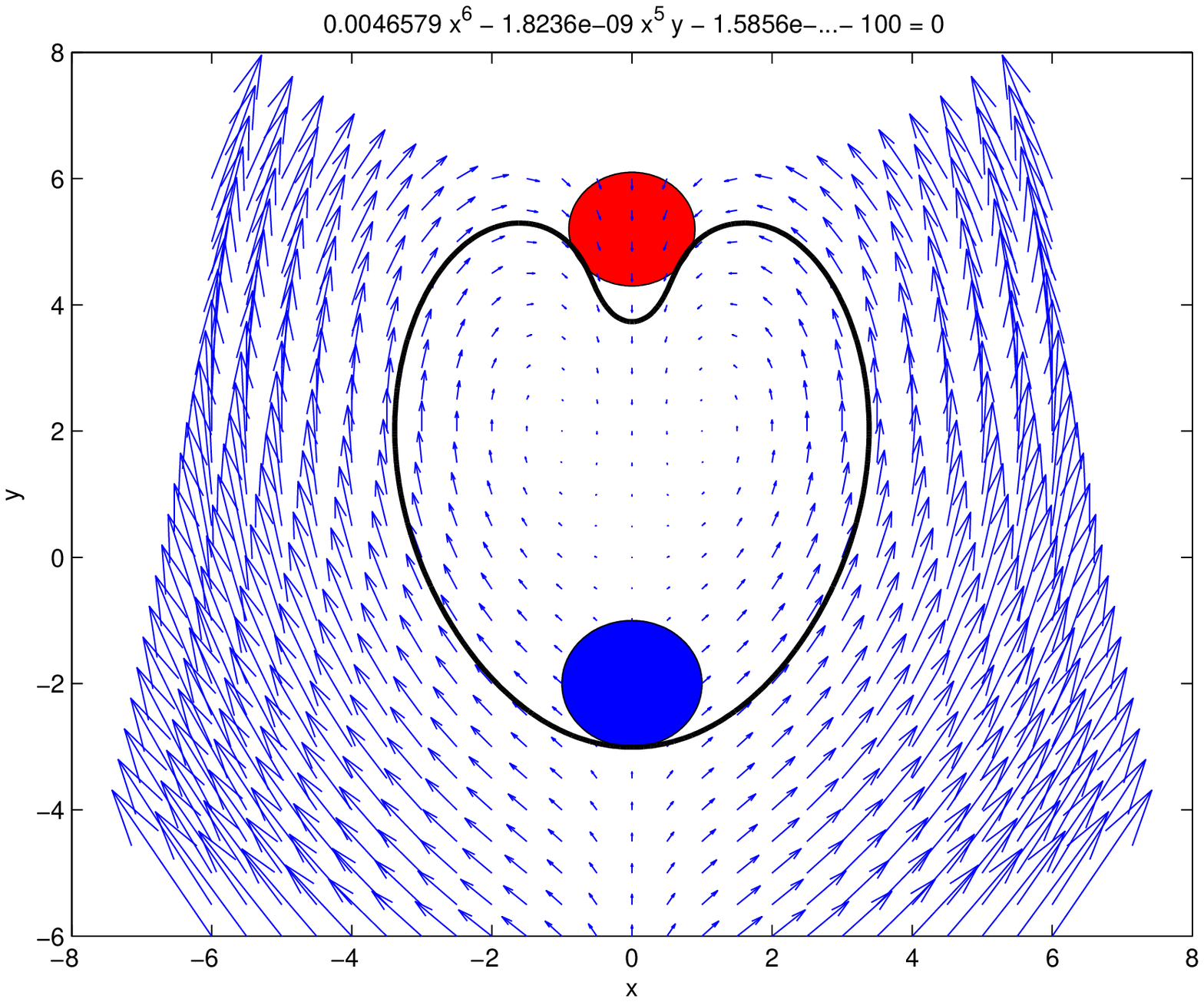}
					\put(60,50){\includegraphics[width=20mm]{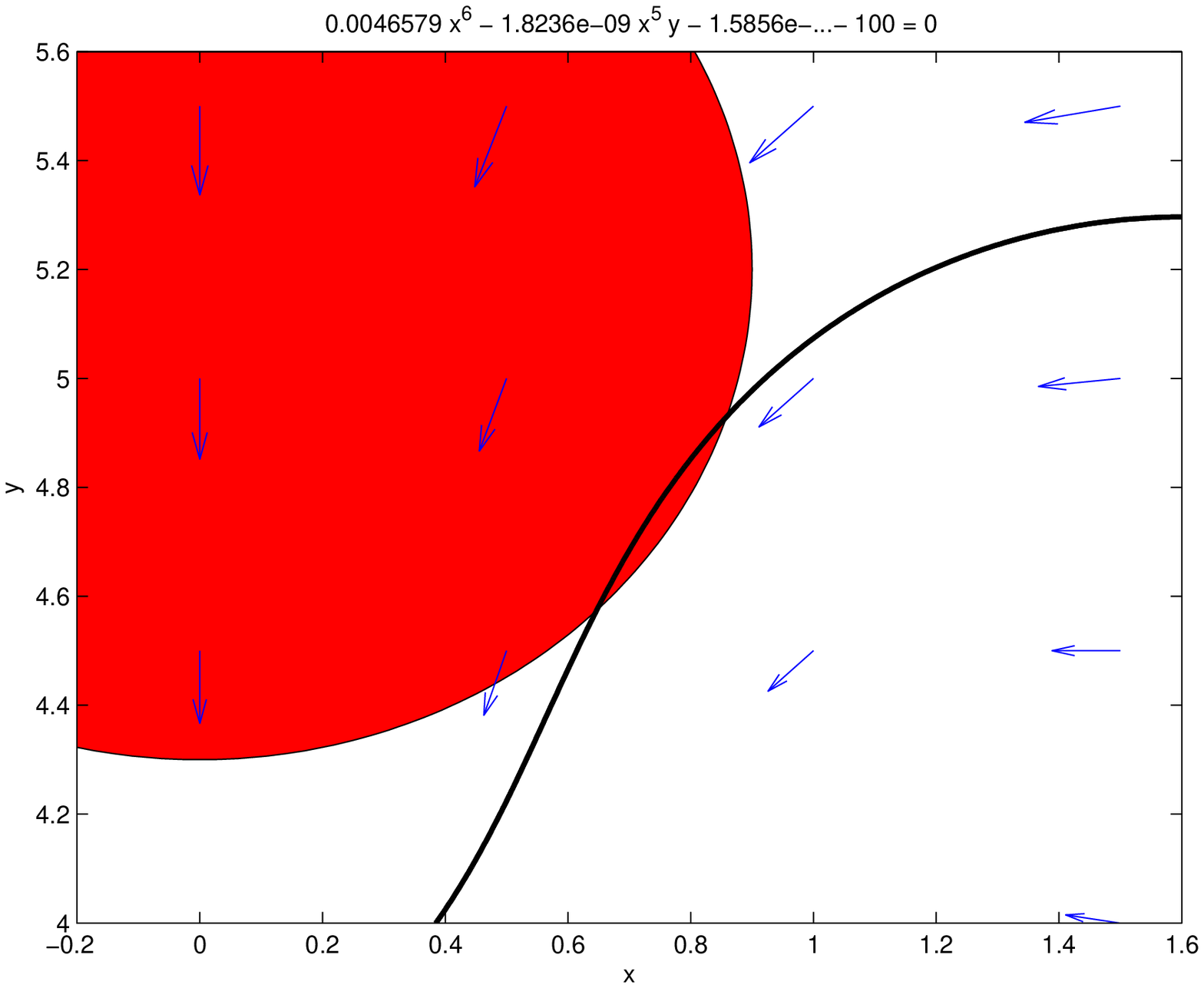}}
				\end{overpic}
				\caption{\small $\alpha=-4,\beta=0$}
				\label{fig:2}
			\end{minipage}	
		\end{figure}
	\end{example}

	\begin{example}
		\label{ex:5}
   Consider the following CDS $\mathcal{D}_{\ref{ex:5}}$
    	\begin{equation*}
		\begin{cases}
			\dot{x_1} =x_2  \\
			\dot{x_2} = 2x_1 - x_2 - x_1^2 x_2 - x_1^3
		\end{cases}
	\end{equation*}
		with $\II= \{ x \in \RR^2 \mid ( x_1 + 1 )^2 + (x_2-2)^2 \leq 0.16 \}$  and
		$\UU = \{ x \in \RR^2 \mid ( x_1-1 )^2 + x_2 ^2 \leq 0.04 \}$.
  Let  $g_0=0.16- ( x_1 + 1 )^2 - (x_2-2)^2,g_1=0.04- ( x_1 - 1 )^2 -  x_2 ^2$.
  In order to prove $\RE_{\mathcal{D}_{\ref{ex:5}}} \cap \UU =\emptyset$,
    according to Theorem~\ref{the:1}, using the above procedure,
     we can obtain the following polynomials:

     {\scriptsize $
       \begin{array}{ll}
    \varphi  = & -0.91253x_1^2 + 0.40176x_1x_2 + 1.3603x_1 + 0.13922x_2^2  \\
	           &- 1.0308x_2 - 0.27657,\\
	\chi =   & 0.19394x_1^4 + 0.29363x_1^3x_2 - 0.1696x_1^3 + 0.091674x_1^2x_2^2 \\
               &- 0.2317x_1^2x_2 - 1.3805x_1^2 + 0.056453x_1x_2^3 - 0.14904x_1x_2^2 \\
			   & + 0.096278x_1x_2 + 1.7932x_1 + 0.070488x_2^4 - 0.063002x_2^3 \\
			   & + 0.48804x_2^2 - 1.1726x_2 - 0.38201 \\

	\delta  = & 0.1956x_1^4 + 0.23674x_1^3x_2 - 0.13109x_1^3 + 0.14603x_1^2x_2^2 \\
	          &- 0.16935x_1^2x_2 + 1.0686x_1^2 + 0.35005x_1x_2^3 - 0.29307x_1x_2^2 \\
			  & - 0.5897x_1x_2 - 1.8943x_1 + 0.26073x_2^4 - 0.23047x_2^3 \\
			  & + 0.027813x_2^2 + 0.64131x_2 + 1.7118, \\
	 u_1  =  & 0.47292x_1^4 + 0.03761x_1^3x_2 - 0.15676x_1^3 + 0.45935x_1^2x_2^2\\
	         & + 0.13126x_1^2x_2 + 0.26007x_1^2 + 0.0766x_1x_2^3 - 0.02395x_1x_2^2\\
			 & + 0.045239x_1x_2 + 0.068505x_1 + 0.33983x_2^4 + 0.17729x_2^3 \\
			 & + 0.4338x_2^2 + 0.054172x_2 + 0.37428 \\
	u_2 =     & 0.45008x_1^4 + 0.0064431x_1^3x_2 - 0.14066x_1^3 + 0.48519x_1^2x_2^2\\
	          &+ 0.18081x_1^2x_2 + 0.31882x_1^2 + 0.045636x_1x_2^3 - 0.030792x_1x_2^2\\
			  & + 0.0463x_1x_2 + 0.022898x_1 + 0.3829x_2^4 + 0.24085x_2^3\\
			  &+ 0.48187x_2^2 + 0.10909x_2 + 0.37734 \\

    u_3 =  &  0.5497x_1^4 - 0.035471x_1^3x_2 + 0.073809x_1^3 + 0.66023x_1^2x_2^2\\
	       & - 0.085302x_1^2x_2 + 0.34888x_1^2 - 0.020016x_1x_2^3 + 0.55526x_1x_2^2\\
		   &+ 0.032773x_1x_2 - 0.10637x_1 + 0.81332x_2^4 - 0.055596x_2^3\\
		   & + 0.49761x_2^2 + 0.25765x_2 + 0.93038  \\
    \psi_1(\theta)  = & \psi_2(\theta)=-4\theta+2\theta^2,
    \end{array}$ }
     where $\delta$,$u_1$,$u_2$, $u_3 -\chi-u_1g_0$, $-\LIDER_f(\chi)+\psi_1(\chi)$, $-\varphi-u_2g_0$, $-\LIDER_f(\varphi)+\psi_2(\varphi)+\delta \chi$, $\varphi-u_2g_1$ are positive polynomials. \qed
\end{example}

\begin{example}
	\label{ex:7}
Consider an HS with two modes in Fig.~6, in which
 the CDSs  at $q_1$ and $q_2$ are respectively $\dot{\xx} = \fb_1(\xx)$ and $\dot{\xx} = \fb_2(\xx)$, where
\begin{equation*}
 ~\fb_1(\xx) =
	\begin{cases}
		x_2 \\
		- x_1 - x_3\\
		x_1 + ( 2x_2 +3x_3 ) ( 1 + x_3^2 ),
	\end{cases} \hspace*{-1cm}
	\fb_2(\xx) =
	\begin{cases}
		x_2 \\
		- x_1 - x_3\\
		- x_1 - 2x_2 - 3x_3,
	\end{cases}
\end{equation*}
 $\Xi_{q_1} = \{\xx \in \RR^3 \mid x_1^2 + x_2^2 +x_3^2 \leq 0.01\}$, $\Xi_{q_2}=\emptyset$,
  $D_{q_1} = x_1^2 +0.01x_2^2 + 0.01x_3^2 \leq 1.01$,  $D_{q_2} = x_1^2+x_2^2 +x_3^2 \geq 0.03 \wedge x_1^2 \leq{5.1}^2$,
   $g_1 = 0.99 \leq x_1^2 + 0.01 x_2^2 + 0.01 x_3^2 \leq 1.01$, and
$g_2 = 0.03 \leq x_1^2 +x_2^2 +x_3^2 \leq 0.05$. All resets are identity.

The proof obligation is to verify $| x_1 | \leq 3.2$ at $q_2$. To the end, we synthesize barrier certificates at
each mode first (see the appendix), then we need to verify the following  five conditions :\\[2mm]
$\left\{ \begin{array}{l}
  c_1=-\varphi_2-u_{23}g_{11}-u_{24}g_{12}\geq 0,\\
  c_2=-\chi_2 - u_{21}g_{11}-u_{22}g_{12} \geq 0,\\
  c_3=-\LIDER_f(\chi_2)-0.2\chi_2+\chi_2^2-u_{41}D_2-u_{41}D_{21} \geq 0,\\
  c_4=-\LIDER_f(\varphi_2)-0.2\varphi_2+\varphi_2^2-\delta_2\chi_2-u_{51}D_2 -u_{52}D_{21} \geq 0,\\
  c_5=\varphi_2 -U_2-0.00001\geq 0,
   \end{array} \right.$ \\[2mm]
   by {\sdp}. In which,
$g_{11}=1.01-x_1^2-0.01x_2^2-0.01x_3^2$, $g_{12}= x_1^2+0.01x_2^2+0.01x_3^2-0.99$,
$D_2=x_1^2+x_2^2+x_3^2-0.03$,
$D_{21}=26.01-x_1^2$,
$U_2=x_1^2-10.24$,  and
$u_{21},u_{22}$, $u_{23}$, $u_{24}$, $u_{41}$, $u_{42}$, $u_{51}$, $u_{52}$, $\delta_2$ are \textbf{SOS} synthesized in the first step.
 \qed

\begin{figure}
  \begin{center}
    \setlength{\unitlength}{0.6cm}
    \begin{picture}(10,5)(-5,-2.5)
      \allinethickness{1pt}
      \put(-3,0){\circle{2.5}}
      \put(3,0){\circle{2.5}}
      \spline(-1.9,0.55)(0,1.5)(1.9,0.55)
      \spline(-1.9,-0.55)(0,-1.5)(1.9,-0.55)
      \drawline(1.6,0.55)(1.9,0.55)(1.7,0.76)
      \drawline(-1.6,-0.55)(-1.9,-0.55)(-1.7,-0.76)
      \put(-3.1,0){\makebox(0,0)[l]{$q_1$}}
      \put(2.8,0){\makebox(0,0)[l]{$q_2$}}
      \put(-0.2,1.5){\makebox(0,0)[l]{$g_1$}}
      \put(-0.2,-1.5){\makebox(0,0)[l]{$g_2$}}
    \end{picture}
    \caption{An HS with two modes }
  \end{center} \label{fig:a}
\end{figure}
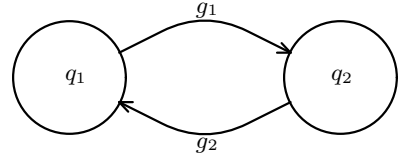
\end{example}

 All the experimental results of all examples given in this paper can be summarised as in Table~\ref{tab:1},
 in which the label $\times$ means that the corresponding method can not obtain a barrier certificate.
All the results listed were computed on a 64-bit Intel(R) Core(TM) i5 CPU 650 @ 3.20GHz with 4GB RAM memory and Ubuntu 12.04 GNU/Linux.

By comparing with the approach reported in \cite{kong2013} (see Table~\ref{tab:1}), our approach can synthesize more barrier certificates,
in particular, with lower degree, but our approach takes more time. However, our approach is still very efficient, typically,
symbolic checking can make our approach to avoid unsoundness because of the error due to
 numeric computation in {\sdp}.

 \begin{center}
	\begin{table}
		\begin{tabular}{|c|c|c|c|c|c|} \hline
       &  \multicolumn{2}{c|}{Exp. cond.} & \multicolumn{3}{c|}{Our method} \\ \cline{2-6}
       &   Degree & Time(s) & Degree & \multicolumn{2}{c|}{Time(s)} \\ \cline{5-6}
       &          &      &        &  {\tiny Synthesis} & {\tiny Symb. checking}\\ \hline
 E.g. \ref{ex:4}& $\times$&  $\times$& 8 & 36.023 & 10.766             \\ \hline
 E.g. \ref{ex:31}& 8&  1.132& 6 & 2.717 & 0.226             \\ \hline
  E.g. \ref{ex:5} & 6 &  1.516& 4 & 4.658 & 0.180             \\ \hline
   E.g. \ref{ex:7} & 4 &  1.387& 2 & 4.260 & 20.472             \\ \hline
 \end{tabular}
\caption{Experimental data. \label{tab:1}}
  \end{table}
\end{center}

\oomit{\scriptsize \begin{center}
	\begin{table}
		\begin{tabular}{ | p {2cm} |  p {1.1cm} | p{1.1cm}|p {1.1cm} |p {1.1cm}|  }
	\hline
	{\tiny Examples} &  {\tiny Example }    &  {\tiny Example \ref{ex:31}} &{\tiny Example \ref{ex:5}}   &{\tiny Example \ref{ex:7}}    \\ \hline
	{\tiny \ec\ (degree)}                &  $\times$   &   8    &  6    &  4     \\ \hline
	{\tiny Our method\ (degree)}         &  8          & 6      &  4    &  2     \\ \hline
	{\tiny \ec\ computing time(s)}       &  $\times$   & 1.132  &1.516  & 1.387  \\ \hline
	{\tiny Numerical computing time(s)}  &  36.023     & 2.717  &4.658  & 4.260  \\ \hline
	{\tiny Symbolic checking time(s)}    & 10.776      & 0.226  & 0.180 & 20.472  \\
	\hline
\end{tabular}
\caption{Experimental data. \label{tab:1}}
  \end{table}
\end{center} }

\section{Concluding Remarks}\label{sec:con}
To summarize, the contributions of this paper include: \vspace*{-.3cm}
 \begin{itemize}
  \item  Relaxation of 
     the conditions of barrier certificate in a general way,
       so that one can utilize weaker conditions flexibly to
       synthesize various kinds of  barrier certificates with more expressiveness,
       which gives more opportunities to verify the considered system. \vspace*{-.1cm}
 \item A method to combining two functions
        together to form a combined barrier certificate in order to prove a safety property under consideration,
        whereas  neither of them can be used as a barrier certificate separately. \vspace*{-.1cm}
 \item   An approach to synthesizing certificates
     according to the general relaxed conditions by semi-definite programming. In particular, we discussed
      how to apply symbolic checking to guarantee the soundness of our approach caused by the error
      of numeric computation in {\sdp}. \vspace*{-.1cm}
 \item Experimental results demonstrating that our approach can indeed discover more certificates and
 give more opportunities to verify an HS under consideration.
 \end{itemize} \vspace*{-.3cm}

 For future work, we plan to combine more than two functions to form a combined barrier certificate. In particular, we are interested in finding more functions $\psi$ satisfying condition (\ref{cond:4}) and establishing a library for them.
 In addition, it is interesting to investigate how to recover the error caused by the numeric computation in {\sdp} by some
 symbolic computation techniques.

\bibliographystyle{abbrv}
\bibliography{references}  

\begin{thebibliography}{10}

\bibitem{ACH95}
R.~Alur, C.~Courcoubetis, N.~Halbwachs, T.~Henzinger, P.-H. Ho, X.~Nicollin,
  A.~Olivero, J.~Sifakis, and S.~Yovine.
\newblock The algorithmic analysis of hybrid systems.
\newblock {\em Theoretical Computer Science}, 138(1):3--34, 1995.

\bibitem{ACHH93}
R.~Alur, C.~Courcoubetis, T.~A. Henzinger, and P.-H. Ho.
\newblock Hybrid automata: An algorithmic approach to the specification and
  verification of hybrid systems.
\newblock In {\em Hybrid Systems}, volume 736 of {\em LNCS}, pages 209--229.
  1993.

\bibitem{DXZ2013}
L.~Dai, B.~Xia, and N.~Zhan.
\newblock Generating non-linear interpolants by semidefinite programming.
\newblock In {\em CAV'13}, LNCS, pages 364--380, 2013.

\bibitem{GT08}
S.~Gulwani and A.~Tiwari.
\newblock Constraint-based approach for analysis of hybrid systems.
\newblock In {\em CAV'08}, volume 5123 of {\em LNCS}, pages 190--203. Springer
  Berlin Heidelberg, 2008.

\bibitem{cpsd}
J.~Han, Z.~Jin, and B.~Xia.
\newblock Proving inequalities and solving global optimization problems via
  simplified cad projection.
\newblock {\em CoRR}, abs/1205.1223, 2012.

\bibitem{HS06}
T.~Henzinger and J.~Sifakis.
\newblock The embedded systems design challenge.
\newblock In {\em FM'06}, volume 4085 of {\em LNCS}, pages 1--15. 2006.

\bibitem{HH95}
T.~A. Henzinger and P.-H. Ho.
\newblock Algorithmic analysis of nonlinear hybrid systems.
\newblock In {\em CAV'95}, volume 939 of {\em LNCS}, pages 225--238. 1995.

\bibitem{JM98}
M.~Jirstrand.
\newblock Invariant sets for a class of hybrid systems.
\newblock In {\em CDC'98}, volume~4, pages 3699--3704. IEEE, 1998.

\bibitem{kong2013}
H.~Kong, F.~He, X.~Song, W.~Hung, and M.~Gu.
\newblock Exponential-condition-based barrier certificate generation for safety
  verification of hybrid systems.
\newblock In {\em CAV'13}, volume 8044 of {\em LNCS}, pages 242--257. Springer,
  2013.

\bibitem{kong2013new}
H.~Kong, X.~Song, D.~Han, M.~Gu, and J.~Sun.
\newblock A new barrier certificate for safety verification of hybrid systems.
\newblock {\em The Computer Journal}, 2013.

\bibitem{LPY01}
G.~Lafferriere, G.~J. Pappas, and S.~Yovine.
\newblock Symbolic reachability computation for families of linear vector
  fields.
\newblock {\em Journal of Symbolic Computation}, 32(3):231--253, 2001.

\bibitem{Lee00}
E.~Lee.
\newblock What's ahead for embedded software?
\newblock {\em IEEE Computer}, 33(9):18--26, 2000.

\bibitem{LZZ11}
J.~Liu, N.~Zhan, and H.~Zhao.
\newblock Computing semi-algebraic invariants for polynomial dynamical systems.
\newblock In {\em EMSOFT'11}, pages 97--106, New York, NY, USA, 2011. ACM.

\bibitem{MMP92}
O.~Maler, Z.~Manna, and A.~Pnueli.
\newblock From timed to hybrid systems.
\newblock In {\em Proceedings of the Real-Time: Theory in Practice, REX
  Workshop}, pages 447--484, London, UK, 1992. Springer-Verlag.

\bibitem{Parrilo00}
P.~A. Parrilo.
\newblock {\em Structured semidefinite programs and semialgebraic geometry
  methods in robustness and optimization}.
\newblock PhD thesis, California Inst. of Tech., 2000.

\bibitem{Parrilo03}
P.~A. Parrilo.
\newblock Semidefinite programming relaxations for semialgebraic problems.
\newblock {\em Mathematical Programming}, 96:293--320, 2003.

\bibitem{PlatzerClarke08}
A.~Platzer and E.~Clarke.
\newblock Computing differential invariants of hybrid systems as fixedpoints.
\newblock In {\em CAV'08}, volume 5123 of {\em Lecture Notes in Computer
  Science}, pages 176--189. Springer Berlin Heidelberg, 2008.

\bibitem{PlatzerClarke09}
A.~Platzer and E.~M. Clarke.
\newblock Computing differential invariants of hybrid systems as fixedpoints.
\newblock {\em Formal Methods in System Design}, 35(1):98--120, 2009.

\bibitem{PJ04}
S.~Prajna and A.~Jadbabaie.
\newblock Safety verification of hybrid systems using barrier certificates.
\newblock In {\em HSCC'04}, volume 2993 of {\em LNCS}, pages 477--492. Springer
  Berlin Heidelberg, 2004.

\bibitem{prajna2007}
S.~Prajna, A.~Jadbabaie, and G.~J. Pappas.
\newblock A framework for worst-case and stochastic safety verification using
  barrier certificates.
\newblock {\em IEEE Transactions on Automatic Control}, 52(8):1415--1428, 2007.

\bibitem{prajna2005}
S.~Prajna, A.~Papachristodoulou, P.~Seiler, and P.~A. Parrilo.
\newblock Sostools and its control applications.
\newblock In {\em Positive polynomials in control}, pages 273--292. Springer,
  2005.

\bibitem{PV94}
A.~Puri and P.~Varaiya.
\newblock Decidability of hybrid systems with rectangular differential
  inclusions.
\newblock In {\em CAV'94}, volume 818 of {\em LNCS}, pages 95--104. Springer
  Berlin Heidelberg, 1994.

\bibitem{rodriguez2005}
E.~Rodr{\'\i}guez-Carbonell and A.~Tiwari.
\newblock Generating polynomial invariants for hybrid systems.
\newblock In {\em HSCC'05}, pages 590--605. Springer, 2005.

\bibitem{sankaranarayanan2010}
S.~Sankaranarayanan.
\newblock Automatic invariant generation for hybrid systems using ideal fixed
  points.
\newblock In {\em HSCC'10}, pages 221--230. ACM, 2010.

\bibitem{SSM04a}
S.~Sankaranarayanan, H.~B. Sipma, and Z.~Manna.
\newblock Constructing invariants for hybrid systems.
\newblock In {\em HSCC'04}, volume 2993 of {\em LNCS}, pages 539--554. Springer
  Berlin Heidelberg, 2004.

\bibitem{sloth2012}
C.~Sloth, G.~J. Pappas, and R.~Wisniewski.
\newblock Compositional safety analysis using barrier certificates.
\newblock In {\em HSCC'12}, pages 15--24. ACM, 2012.

\bibitem{TT09}
A.~Taly and A.~Tiwari.
\newblock Deductive verification of continuous dynamical systems.
\newblock In {\em FSTTCS'09}, volume~4 of {\em LIPIcs}, pages 383--394, 2009.

\bibitem{ZWZ13}
N.~Zhan, S.~Wang, and H.~Zhao.
\newblock Formal modelling, analysis and verification of hybrid systems.
\newblock In {\em Unifying Theories of Programming and Formal Engineering
  Methods}, volume 8050 of {\em LNCS}, pages 207--281. Springer, 2013.

\end{thebibliography}

\newpage

\appendix
\section{The details of Examples}

The polynomials synthesized in Example  \ref{ex:4} are: \\
	{\scriptsize
	$ \varphi=  0.030317x_1^8 + 6.9115e-05x_1^7x_2 - 3.6889e-05x_1^7 + 0.090347x_1^6x_2^2 - 0.11095x_1^6x_2 - 0.75683x_1^6
		  - 9.0598e-05x_1^5x_2^3 - 0.00017438x_1^5x_2^2 - 5.4845e-05x_1^5x_2 + 7.291e-05x_1^5 - 0.30715x_1^4x_2^4 + 1.0445
		    x_1^4x_2^3 - 1.5458x_1^4x_2^2 + 0.57141x_1^4x_2 - 0.26344x_1^4 - 6.3369e-05x_1^3x_2^5 + 0.00010503x_1^3x_2^4
			  + 0.00038237x_1^3x_2^3 - 0.00036159x_1^3x_2^2 - 0.00010184x_1^3x_2 + 9.2214e-05x_1^3 + 0.03383x_1^2x_2^6 + 0.33103
			    x_1^2x_2^5 - 2.9864x_1^2x_2^4 + 2.0938x_1^2x_2^3 - 0.12636x_1^2x_2^2 + 0.79519x_1^2x_2 - 0.62237x_1^2 + 1.4962e-05x_1
				  x_2^7 - 0.00014241x_1x_2^6 + 0.00048485x_1x_2^5 - 0.00065416x_1x_2^4 + 0.00014521x_1x_2^3 + 0.00040002x_1x_2^2
				    - 0.00031516x_1x_2 + 6.343e-05x_1 - 0.0043261x_2^8 + 0.05803x_2^7 - 0.29525x_2^6 + 0.80728x_2^5 - 1.2538x_2^4
					  + 1.2862x_2^3 - 0.76567x_2^2 + 0.29172x_2 - 0.072688,$\\
	$\chi=   9.8484x_1^6 + 0.001271x_1^5x_2 + 13.4422x_1^4x_2^2 - 31.2496x_1^4x_2 - 85.8767x_1^4 - 0.0031705x_1^3x_2^2 - 0.012227
	  x_1^3x_2 - 0.0042103x_1^3 + 5.396x_1^2x_2^4 - 28.4976x_1^2x_2^3 - 46.3212x_1^2x_2^2 + 87.5486x_1^2x_2 - 44.1755x_1^2
	    + 0.0049683x_1x_2^2 - 0.0058767x_1x_2 - 0.0020784x_1 + 0.46783x_2^6 - 4.0071x_2^5 + 6.1875x_2^4 + 37.296x_2^3 - 100
		  x_2^2 + 2.0932x_2 - 12.8904,$\\
	$ \delta= 0.014034x_1^8 + 3.0608e-06x_1^7x_2 - 5.813e-06x_1^7 + 0.0021473x_1^6x_2^2 - 0.013483x_1^6x_2 - 0.0064165x_1^6 + 2.5531e-06x_1^5x_2^3 + 2.7689e-05x_1^5x_2^2 - 1.0371e-05x_1^5x_2 - 2.9253e-06x_1^5 + 0.02322x_1^4x_2^4 - 0.008259x_1^4x_2^3 + 0.0095319x_1^4x_2^2 + 0.014437x_1^4x_2 + 0.02625x_1^4 + 1.126e-05x_1^3x_2^5 - 3.9758e-05x_1^3x_2^4 + 3.4426e-05x_1^3x_2^3 - 1.2647e-05x_1^3x_2^2 + 8.5785e-06x_1^3x_2 - 5.7495e-07x_1^3 + 0.00051658x_1^2x_2^6 - 0.010421x_1^2x_2^5 + 0.032928x_1^2x_2^4 - 0.041062x_1^2x_2^3 + 0.030059x_1^2x_2^2 - 0.030394x_1^2x_2 + 0.013862x_1^2 + 5.2631e-07x_1x_2^7 - 4.7302e-06x_1x_2^6 + 1.3673e-05x_1x_2^5 - 1.3654e-05x_1x_2^4 + 8.3569e-07x_1x_2^3 - 1.6543e-06x_1x_2^2 + 1.3272e-05x_1x_2 - 8.3958e-06x_1 + 0.00013121x_2^8 - 0.0015345x_2^7 + 0.0076214x_2^6 - 0.019749x_2^5 + 0.02836x_2^4 - 0.023961x_2^3 + 0.01575x_2^2 - 0.010837x_2 + 0.0044092 ,$\\
	$u_1= 33.1703x_1^4 - 0.0080558x_1^3x_2 - 0.014014x_1^3 + 31.8846x_1^2x_2^2 - 22.7751x_1^2x_2 + 33.5594x_1^2 + 0.002164x_1
	  x_2^3 - 0.0059715x_1x_2^2 - 0.037073x_1x_2 + 0.020061x_1 + 10.479x_2^4 + 6.0815x_2^3 + 19.5851x_2^2 - 18.8795x_2
	    + 24.5699,$\\
 	$ u_2= 0.579x_1^8 + 7.0204e-06x_1^7x_2 - 1.8771e-05x_1^7 + 0.61572x_1^6x_2^2 - 0.43594x_1^6x_2 + 0.39633x_1^6 + 4.0635e-06x_1^5x_2^3 + 5.4444e-06x_1^5x_2^2 - 9.0679e-06x_1^5x_2 + 4.1779e-05x_1^5 + 0.5972x_1^4x_2^4 - 0.446x_1^4x_2^3 + 0.8667x_1^4x_2^2 - 0.48811x_1^4x_2 + 0.57967x_1^4 + 2.0738e-06x_1^3x_2^5 + 4.4963e-06x_1^3x_2^4 - 3.9037e-06x_1^3x_2^3 - 1.5008e-05x_1^3x_2^2 - 6.0762e-05x_1^3x_2 + 3.4303e-05x_1^3 + 0.42761x_1^2x_2^6 - 0.20453x_1^2x_2^5 + 0.45199x_1^2x_2^4 - 0.55762x_1^2x_2^3 + 0.80255x_1^2x_2^2 - 0.18571x_1^2x_2 + 0.36852x_1^2 + 5.1943e-06x_1x_2^7 - 5.229e-06x_1x_2^6 + 4.0646e-06x_1x_2^5 - 8.2704e-06x_1x_2^4 + 1.2324e-05x_1x_2^3 - 3.1593e-06x_1x_2^2 - 8.7493e-06x_1x_2 + 3.0456e-06x_1 + 0.18043x_2^8 + 0.1527x_2^7 + 0.11373x_2^6 + 0.090147x_2^5 + 0.40667x_2^4 - 0.26137x_2^3 + 0.68588x_2^2 - 0.38649x_2 + 0.50807,$\\
  $ u_3=0.82691x_1^8 + 6.8463e-06x_1^7x_2 + 7.824e-06x_1^7 + 0.66339x_1^6x_2^2 + 0.69976x_1^6x_2 + 0.71112x_1^6 + 2.8381e-06x_1^5x_2^3 + 2.1678e-05x_1^5x_2^2 - 2.269e-05x_1^5x_2 - 2.7155e-05x_1^5 + 0.67426x_1^4x_2^4 + 0.31337x_1^4x_2^3 + 1.0011x_1^4x_2^2 + 0.41117x_1^4x_2 + 0.94169x_1^4 + 5.177e-06x_1^3x_2^5 + 2.4687e-05x_1^3x_2^4 + 4.9193e-05x_1^3x_2^3 - 8.5264e-05x_1^3x_2^2 - 9.261e-05x_1^3x_2 - 0.00018356x_1^3 + 0.55287x_1^2x_2^6 + 0.26734x_1^2x_2^5 + 0.43798x_1^2x_2^4 - 0.42762x_1^2x_2^3 + 0.90269x_1^2x_2^2 + 0.47337x_1^2x_2 + 0.73082x_1^2 + 6.4342e-06x_1x_2^7 + 1.7535e-05x_1x_2^6 + 3.191e-05x_1x_2^5 + 2.9182e-05x_1x_2^4 + 7.677e-05x_1x_2^3 + 2.0006e-05x_1x_2^2 - 3.1684e-05x_1x_2 - 6.6486e-06x_1 + 0.45107x_2^8 - 0.15576x_2^7 + 0.12208x_2^6 - 0.29031x_2^5 + 0.39853x_2^4 - 0.57126x_2^3 + 0.29347x_2^2 - 0.33244x_2 + 0.43254 ,$\\
   $\psi_i(\theta)= \psi_2(\theta)=-4\theta+2\theta^2.$ }

The polynomials synthesized in Example \ref{ex:31} are : \\
{\scriptsize
$\varphi=  9.8484x_1^6 + 0.001271x_1^5x_2 + 13.4422x_1^4x_2^2 - 31.2496x_1^4x_2 - 85.8767x_1^4 - 0.0031705x_1^3x_2^2 - 0.012227x_1^3x_2 - 0.0042103x_1^3 + 5.396x_1^2x_2^4 - 28.4976x_1^2x_2^3 - 46.3212x_1^2x_2^2 + 87.5486x_1^2x_2 - 44.1755x_1^2 + 0.0049683x_1x_2^2 - 0.0058767x_1x_2 - 0.0020784x_1 + 0.46783x_2^6 - 4.0071x_2^5 + 6.1875x_2^4 + 37.296x_2^3 - 100x_2^2 + 2.0932x_2 - 12.8904,$\\
	$\chi=0,\delta=0,u1=0,$ \\
	$u_2=  9.8484x_1^6 + 0.001271x_1^5x_2 + 13.4422x_1^4x_2^2 - 31.2496x_1^4x_2 - 85.8767x_1^4 - 0.0031705x_1^3x_2^2 - 0.012227x_1^3x_2 - 0.0042103x_1^3 + 5.396x_1^2x_2^4 - 28.49    76x_1^2x_2^3 - 46.3212x_1^2x_2^2 + 87.5486x_1^2x_2 - 44.1755x_1^2 + 0.0049683x_1x_2^2 - 0.0058767x_1x_2 - 0.0020784x_1 + 0.46783x_2^6 - 4.0071x_2^5 + 6.1875x_2^4 + 37.296x_2^3 - 1    00x_2^2 + 2.0932x_2 - 12.8904,$\\
	$u_3=9.8484x_1^6 + 0.001271x_1^5x_2 + 13.4422x_1^4x_2^2 - 31.2496x_1^4x_2 - 85.8767x_1^4 - 0.0031705x_1^3x_2^2 - 0.012227x_1^3x_2 - 0.0042103x_1^3 + 5.396x_1^2x_2^4 - 28.49    76x_1^2x_2^3 - 46.3212x_1^2x_2^2 + 87.5486x_1^2x_2 - 44.1755x_1^2 + 0.0049683x_1x_2^2 - 0.0058767x_1x_2 - 0.0020784x_1 + 0.46783x_2^6 - 4.0071x_2^5 + 6.1875x_2^4 + 37.296x_2^3 - 1    00x_2^2 + 2.0932x_2 - 12.8904, $ \\
$\psi_1=0,\psi_2(\theta)=-4\theta+1.5\theta^2 $}.

 \oomit{ Example \ref{ex:6} result:
  {\scriptsize
	  $\varphi=  -0.90136x^2 + 0.39265xy + 1.3447x + 0.14145y^2 - 1.014y - 0.27455,$\\
	  $\chi= 0.19131x^4 + 0.29171x^3y - 0.16134x^3 + 0.091881x^2y^2 - 0.2384x^2y - 1.3727x^2 + 0.05464xy^3
	  - 0.14805xy^2 + 0.102xy + 1.7737x + 0.068654y^4 - 0.056502y^3 + 0.47457y^2 - 1.1518y - 0.37859,$\\
	  $\delta=  0.19196x^4 + 0.23548x^3y - 0.11526x^3 + 0.15008x^2y^2 - 0.17153x^2y + 1.0537x^2 + 0.34959xy^3
	  - 0.28626xy^2 - 0.58405xy - 1.9046x + 0.25712y^4 - 0.23108y^3 + 0.014949y^2 + 0.64477y + 1.7221,$\\
	  $ u_1= 0.47227x^4 + 0.038168x^3y - 0.15569x^3 + 0.45851x^2y^2 + 0.13016x^2y + 0.26135x^2 + 0.077417x
	  y^3 - 0.022515xy^2 + 0.045118xy + 0.06659x + 0.3391y^4 + 0.17597y^3 + 0.43174y^2 + 0.056003y
	  + 0.37366,$\\
	  $ u_2=  0.44971x^4 + 0.0072672x^3y - 0.14019x^3 + 0.48349x^2y^2 + 0.17998x^2y + 0.31934x^2 + 0.047577x
	  y^3 - 0.0286xy^2 + 0.046067xy + 0.022173x + 0.38099y^4 + 0.23786y^3 + 0.47918y^2 + 0.10905y
	  + 0.3764, $\\
	  $u_3=  0.54735x^4 - 0.035582x^3y + 0.068137x^3 + 0.6599x^2y^2 - 0.085247x^2y + 0.34236x^2 - 0.019623x
	  y^3 + 0.55507xy^2 + 0.030556xy - 0.10864x + 0.81329y^4 - 0.054725y^3 + 0.49849y^2 + 0.25148y
	  + 0.92339,$\\
	  $\psi_1(\theta)=\psi_2(\theta)=-\theta+2\theta^2 $.
  } }

The polynomials synthesized in Example \ref{ex:7} are:
  {\scriptsize
$\varphi_2 = 1.6165x_1^2 - 0.20569x_1x_2 + 0.19824e-1x_1x_3 + 0.95436e-5x_1 + 0.54446e-1x_2^2 +
0.69996e-3x_2x_3 - 0.16916e-6x_2 + 0.9101e-1x_3^2 + 0.1511e-7x_3 - 9.6424$\\
$\chi_2 = 0.89818e-1x_1^2 - 0.82739e-1x_1x_2 + 0.21192e-1x_1x_3 - 0.15224e-8x_1 +
0.54928e-2x_2^2 + 0.84123e-2x_2x_3 + 0.1277e-8x_2 + 0.35173e-1x_3^2 + 0.27238e-9x_3 - 5.3973$\\
$\delta_2 = 5.5914x_1^2 - 0.21067x_1x_2 - 0.24733e-1x_1x_3 + 0.87702e-5x_1 + 0.20573x_2^2 -
0.52174e-1x_2x_3 + 0.28769e-6x_2 + 0.22449x_3^2 + 0.87144e-7x_3 + 0.29484$\\
$u_{21} = 1.5356x_1^2 + 0.13731e-1x_1x_2 - 0.19249e-2x_1x_3 - 0.10079e-6x_1 + 0.66295x_2^2 -
0.64549e-1x_2x_3 - 0.63485e-7x_2 + 0.39611x_3^2 - 0.66953e-8x_3 + 2.6867$\\
$u_{22} = 0.73288x_1^2 - 0.22775e-2x_1x_2 + 0.27401e-2x_1x_3 - 0.51154e-7x_1 + 0.59472x_2^2 -
0.55279e-1x_2x_3 - 0.48206e-7x_2 + 0.34978x_3^2 - 0.74061e-8x_3 + 0.60632$\\
$u_{23} = 2.0821x_1^2 + 0.40593e-1x_1x_2 - 0.50855e-2x_1x_3 - 0.8427e-4x_1 + 0.61146x_2^2 -
0.90046e-2x_2x_3 - 0.83808e-5x_2 + 0.14389x_3^2 - 0.1148e-5x_3 + 4.5124$\\
$u_{24} = 1.0004x_1^2 + 0.1131e-1x_1x_2 - 0.22779e-2x_1x_3 - 0.288e-4x_1 + 0.517x_2^2 - 0.80914e-2x_2x_3 -
0.11074e-4x_2 + 0.83205e-1x_3^2 - 0.82264e-6x_3 + 0.70099$\\
$u_{41} = 0.43056e-3x_1^2 - 0.29796e-4x_1x_2 + 0.10489e-3x_1x_3 + 0.59287e-11x_1 + 3.8141e-6x_2^2 +
0.95752e-5x_2x_3 + 0.26518e-12x_2 + 0.39903e-4x_3^2 + 0.51833e-12x_3 + 0.36e-2$\\
$u_{42} = 0.56936e-2x_1^2 + 0.53069e-2x_1x_2 + 0.35737e-2x_1x_3 - 0.75779e-10x_1 + 0.20039e-2x_2^2 +
0.26891e-2x_2x_3 + 0.29818e-10x_2 + 0.16505e-2x_3^2 - 0.16159e-10x_3 + 0.52902$\\
$u_{51} = 0.28447e-1x_1^2 - 0.28324e-2x_1x_2 - 0.37952e-3x_1x_3 + 0.125e-6x_1 + 0.52784e-3x_2^2 -
0.86183e-4x_2x_3 - 0.63026e-8x_2 + 0.88611e-4x_3^2 - 0.86257e-9x_3 + 0.11143e-1$\\
$u_{52} = 0.12129x_1^2 - 0.13405e-1x_1x_2 - 0.15008e-2x_1x_3 - 0.82285e-6x_1 + 0.47845e-2x_2^2 -
0.73624e-3x_2x_3 - 0.20348e-6x_2 + 0.52816e-3x_3^2 + 0.21178e-7x_3 + 3.5079$
}

  \end{document}